\newtheorem{theorem}{Theorem}[section]
\newtheorem{lemma}[theorem]{Lemma}
\newtheorem{proposition}[theorem]{Proposition}
\theoremstyle{definition}
\newtheorem{definition}[theorem]{Definition}
\newcommand{\Z}{\mathbb{Z}}
\numberwithin{equation}{section}
\providecommand{\customgenericname}{}
\newcommand{\newcustomtheorem}[2]{%
  \newenvironment{#1}[1]
  {%
   \renewcommand\customgenericname{#2}%
   \renewcommand\theinnercustomgeneric{##1}%
   \innercustomgeneric
  }
  {\endinnercustomgeneric}
}
\begin{document}

\title{Quantum and Classical Algorithms for Bounded Distance Decoding}
 

 \title{Quantum and Classical Algorithms for Bounded Distance Decoding}
 
 \author{Richard Allen\thanks{\href{mailto:rallen@college.harvard.edu}{rallen@college.harvard.edu}}
        \and
        Ratip Emin Berker\thanks{\href{mailto:rberker@college.harvard.edu}{rberker@college.harvard.edu}}
        \and
        S{\'i}lvia Casacuberta\thanks{\href{mailto:scasacubertapuig@college.harvard.edu}{scasacubertapuig@college.harvard.edu}}
        \and
        Michael Gul\thanks{\href{mailto:michaelgul@college.harvard.edu}{michaelgul@college.harvard.edu}}}
\date{Harvard University \\[2ex]
February 17, 2022}
\maketitle

\abstract{In this paper, we provide a comprehensive overview of a recent debate over the quantum versus classical solvability of bounded distance decoding (BDD). Specifically, we review the work of Eldar and Hallgren \cite{eh21}, \cite{h21} demonstrating a quantum algorithm solving $\lambda_1 2^{-\Omega(\sqrt{k \log q})}$-BDD in polynomial time for lattices of periodicity $q$, finite group rank $k$, and shortest lattice vector length $\lambda_1$. Subsequently, we prove the results of \cite{dvw21}, \cite{dvw21v2} with far greater detail and elaboration than in the original work. Namely, we show that there exists a deterministic, classical algorithm achieving the same result.}

{
  \hypersetup{linkcolor=black}
  \tableofcontents
} 
 
\newpage
 
\section{Introduction}

 Lattice-based encryption, and specifically encryptions based on the Learning With Errors (LWE) problem as introduced by Regev~\cite{reg05}, are considered a promising candidate for post-quantum encryption. The security of LWE rests on the conjectured hardness of lattice problems such as the Closest Vector Problem (CVP) and Bounded Distance Decoding (BDD). However, Eldar and Hallgren \cite{h21} have recently proposed a quantum algorithm for BDD with a subexponential approximation factor.
 The specific quantum protocol used to achieve this exponential speed-up raised questions on the purely classical achievability of the same result. Specifically, Eldar and Hallgren's algorithm uses the Quantum Phase Estimation (QPE) protocol to reduce a worst-case instance of BDD to a random instance in a lattice of lower dimension, while preserving the bound on error magnitude. The resulting problem has a known classical solution. Since the core quantum step is the use of QPE for dimensionality reduction, a natural question is whether classical dimensionality reduction algorithms could be substituted. Shortly after this result, Ducas and van Woerden \cite{dvw21} showed that the Lenstra-Lenstra-Lov\'asz (LLL) lattice basis reduction algorithm can be used to achieve the same result for a specific class of lattices, later generalizing this result to the same type of lattice considered by Eldar and Hallgren.
 In this work, we give an overview of Eldar and Hallgren's quantum algorithm, followed by a detailed exposition of Ducas and van Woerden's classical algorithm. In order to develop the necessary tools for the proof, we survey the field of lattice problems in general, and in particular describe the LLL algorithm in detail. 

 \subsection{Timeline of Publications}
 Eldar and Hallgren first presented their algorithm in a talk given by Hallgren on September 21, 2021 at the Simons Institute \cite{h21}. Their main point can be summarized as follows:
 \begin{theorem}[Eldar and Hallgren \cite{eh21}]\label{OHAyani}
    There exists a $\textnormal{poly}(n, log \ q)$-time quantum algorithm solving $\lambda_1 2^{-\Omega(\sqrt{k \log q})}$-BDD on lattices of dimension $n$, periodicity $q$, and finite group rank $k$ with shortest lattice vector length $\lambda_1$. 
\end{theorem}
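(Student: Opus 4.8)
The plan is to follow the Eldar--Hallgren strategy: use Quantum Phase Estimation (QPE) to convert a worst-case BDD instance on the high-dimensional lattice $L$ into a small system of \emph{exact} linear equations in the short error vector, where the number of ``effective'' unknowns is governed by $k$ and $\log q$ rather than by $n$, and then finish with a classical decoding step in that reduced instance.

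First I would normalize. Using the periodicity, realize $L$ as a $q$-ary lattice, $q\Z^n \subseteq L \subseteq \Z^n$, so that $\bar L := L/q\Z^n$ is a finite abelian group of rank $k$; dually, $\Z^n \subseteq L^\ast \subseteq \tfrac1q\Z^n$ and $L^\ast/\Z^n$ again carries the rank-$k$ structure. A BDD instance is then a target $t = v + e$ with $v \in L$ and $\|e\| \le \lambda_1(L)\cdot 2^{-c\sqrt{k\log q}}$, and it suffices to recover $e$ (then $v = t-e$).

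For the quantum core, I would prepare a discrete Gaussian superposition of width $s$ over $L$ reduced mod $q$, truncated to a $\mathrm{poly}(q)$-size box so the state is efficiently constructible, and apply the QFT over $\Z_q^n$; by standard smoothing/Poisson-summation estimates this yields (up to negligible error) a discrete Gaussian state supported on the dual $L^\ast$, whose measurement produces samples $w \in L^\ast$ from a narrow Gaussian. The width $s$ must be taken above the smoothing parameter $\eta_\varepsilon(L)$ — which is controlled, via transference, by $\sqrt{n}/\lambda_1(L)$ and hence by $\lambda_1$ — and simultaneously small enough that the sampled $w$ are short enough to force $|\langle w, e\rangle| < \tfrac12$. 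For each such $w$ I would run QPE with the ``translation by $t$'' unitary on the torus $\R^n/\Z^n$: its eigenvalue on the character $\chi_w$ is $e^{2\pi i \langle w,t\rangle}$, and since $\langle w, v\rangle \in \Z$ this phase equals $\langle w, e\rangle \bmod 1$; shortness of $w$ and smallness of $e$ then let us round the QPE output to recover $\langle w, e\rangle$ exactly. Repeating assembles a linear system $W e = b$ over $\Q$.

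The classical finish is then to solve this system. Because of the $q$-ary structure the error is only constrained through the rank-$k$ quotient, so the system has effectively $O(k\log q)$ bits of unknown, and it can be solved — by Gaussian elimination once enough independent $w$ are collected, or by LLL/Babai rounding on the associated $q$-ary lattice in the reduced dimension — precisely when the BDD radius lies below $2^{-\Omega(\sqrt{k\log q})}\lambda_1$. I expect the \textbf{main obstacle} to be exactly this parameter balancing: one must choose $s$ in the admissible window above $\eta_\varepsilon(L)$ and below the short-dual-vector threshold, while keeping the QPE precision and ancilla count at $\mathrm{poly}(n,\log q)$, \emph{and} ensuring the reduction amplifies the error only by $2^{O(\sqrt{k\log q})}$ rather than $2^{O(k)}$ — the exponent $\sqrt{k\log q}$ is what makes all three constraints feasible at once. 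Establishing the Fourier/smoothing bounds that guarantee the measured state really is a narrow Gaussian concentrated on the $k$-dimensional ``essential'' part of $L^\ast$, and that the rounded QPE phases are genuinely exact, is the delicate part; the reduction to linear algebra and its solution are comparatively routine.
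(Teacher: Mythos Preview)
Your outline has the right skeleton --- shift-by-$t$ as a unitary, phase estimation, then a classical decode --- but it misses precisely the step that carries the quantum content in Eldar--Hallgren, and that is where the gap lies. You propose to prepare a clean discrete Gaussian superposition over $L$ and QFT it to obtain short dual samples. The paper flags exactly this as the obstacle: getting the un-phased lattice state $\sum_{v\in L}\sum_{z\in C}\ket{v+z}$ (or its Gaussian analogue) would require uncomputing the coefficient register, and that uncomputation \emph{is} BDD. Their fix is not to build that state at all. They QFT the coefficient register and \emph{measure} it, producing a \emph{phased cube state} $\ket{\psi_a}=\sum_c\omega_q^{\,c\cdot a}\sum_{z\in C}\ket{\mathbf{B}c+z}$ for a uniformly random $a$; this is efficiently preparable precisely because the phase is the residue of the un-erased ancilla, and it is an \emph{approximate} eigenvector of $U_t$ with approximate eigenvalue $\omega_q^{\,c'\cdot a}$. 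Your ``truncate to a $\mathrm{poly}(q)$ box'' addresses finiteness of support, not the uncomputation problem; and the standard route to lattice-Gaussian states at width near $\eta_\varepsilon(L)$ (Regev) bootstraps from a BDD oracle, which is circular here. Note too that once a short $w\in L^\ast$ is actually in hand, $\langle w,t\rangle\bmod 1$ is a one-line classical computation --- QPE is doing no work in your plan --- so all of the difficulty has been pushed into the dual-sampling step you have not shown how to carry out.

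The classical finish is also structurally different from the paper's. Because $\ket{\psi_a}$ is only an approximate eigenvector (the cube absorbs the perturbation $\Delta=t-\mathbf{B}c'$ only up to boundary effects), QPE returns \emph{noisy} values of $c'\cdot a\pmod q$, not exact ones. Collecting $m=O(\sqrt{k\log q})$ such samples for independent random $a$'s yields a \emph{random $q$-ary BDD instance} in ambient dimension $m$, which is then solved by Babai's nearest-plane algorithm; the exponent $\sqrt{k\log q}$ arises from balancing $m$ against Babai's $2^{O(m)}$ approximation factor on the reduced instance, not from a window for $s$ between the smoothing parameter and a short-dual threshold. Your route --- exact real inner products $\langle w,e\rangle$ with short dual vectors, followed by linear algebra --- is a different reduction altogether, and if the sampling step were available it would already be a classical algorithm rather than a quantum one.
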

We elaborate on the definitions and implications of the parameters $n,q,$ and $k$ further below. While Hallgren's talk only considered the special case $q = 2^n, k = 1$, it was also briefly mentioned that the theorem holds for arbitrary $q$ and $k$. Three days after the talk, on September 24, 2021, Ducas and van Woerden published a note \cite{dvw21} proving that known classical polynomial-time algorithms (namely, LLL and Babai's algorithms) are sufficient to achieve the result presented by Eldar and Hallgren. Similar to Hallgren's talk, the note only considered the special case where $q = 2^n, k = 1$, hence presenting the theorem:
\begin{theorem}[Ducas and van Woerden, Version 1 \cite{dvw21}]
For any given vector $a \in \mathbb{Z}^n$, define the lattice $L_a=q\mathbb{Z}^n+a\mathbb{Z}$. There exists a deterministic polynomial-time algorithm that solves BDD in $L_a$ for any error up to radius  $\lambda_1(L_a)\cdot 2^{-\Theta(\sqrt{n})}$, where  $\lambda_1(L_a)$ is the length of the shortest vector in $L_a$.
\end{theorem}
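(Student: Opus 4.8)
The plan is to peel the BDD instance down to a one-dimensional ``coefficient recovery'' problem and then attack that by lattice basis reduction in a much smaller dimension. Write $q=2^n$, and note the sandwich $q\Z^n\subseteq L_a\subseteq\Z^n$, so that $L_a/q\Z^n$ is the cyclic group $\langle\bar a\rangle$ of some order $m\mid q$. Given a target $t=v+e$ with $v=qw+ca\in L_a$ (for $w\in\Z^n$, $c\in\Z$) and $\|e\|\le r:=\lambda_1(L_a)\cdot 2^{-\Theta(\sqrt n)}$, it is enough to recover the residue $c\bmod m$: once $c$ is known the vector $w$ is read off coordinatewise as $w=\lfloor(t-ca)/q\rceil$ and $v=qw+ca$ is the BDD answer. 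So the entire problem is to pin down one element $c$ of a cyclic group of order at most $2^n$ --- a single ``digit'' in radix $q$.

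The key idea, which is the classical substitute for Eldar and Hallgren's QPE-based dimensionality reduction, is to break that single radix-$q$ digit into many small digits. Fix $d=\Theta(\sqrt n)$, set $B:=2^{\lceil(\log_2 q)/d\rceil}=2^{\Theta(\sqrt n)}$, and precompute (by repeated doubling and reduction $\bmod q$, in $\mathrm{poly}(n)$ time) the lattice vectors $a^{(0)}=a\bmod q,\ a^{(1)}=Ba\bmod q,\ \dots,\ a^{(d-1)}=B^{d-1}a\bmod q$, all of which lie in $L_a$. Writing $c=\sum_{j=0}^{d-1}c_jB^j$ with $0\le c_j<B$, one has $v\equiv\sum_j c_j a^{(j)}\pmod q$, so recovering $c$ becomes a bounded-distance problem in which the $d$ vectors $a^{(j)}$ play the role of a basis and the digits $c_j$ are the unknown coordinates. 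Since there are only $d$ of them, the $a^{(j)}$ span a subspace of dimension at most $d=\Theta(\sqrt n)$, and the complementary directions of $L_a$ form a $q$-scaled copy of a projection of $\Z^n$ (possibly after a preliminary reduction of the $a^{(j)}$), on which BDD is solved by plain rounding. The genuine difficulty is thus confined to a lattice of dimension $O(d)=O(\sqrt n)$; I would run LLL on a basis of that lattice (in $\mathrm{poly}(n)$ time, all entries being bounded by $q=2^n$), then Babai's nearest-plane algorithm, and finally assemble $(c_j)\mapsto c\mapsto v$. (Running LLL and Babai directly on a basis of $L_a$ only yields radius $\lambda_1(L_a)\cdot 2^{-\Theta(n)}$ because of LLL's worst-case Hermite loss; the point of the radix step is precisely to trade that for the better $\sqrt n$.)

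The arithmetic of the error bound is what pins the exponent at $\sqrt n$, and making it rigorous is the main obstacle. After LLL in dimension $d$, Babai decodes only up to a $2^{O(d)}$ loss in the governing Gram--Schmidt minimum, while the radix-$B$ digitisation costs a further factor $\Theta(B)=2^{\Theta((\log q)/d)}$ --- the unknown digits lying in $[0,B)$ and needing to be separated from competing ``carry'' representations. The overall gap between the radius one can decode and $\lambda_1(L_a)$ therefore has the shape $2^{\Theta(d+(\log q)/d)}$, which is minimised at $d=\Theta(\sqrt{\log q})=\Theta(\sqrt n)$ --- exactly the $\sqrt{k\log q}$ of Theorem~\ref{OHAyani} with $k=1$ --- giving $2^{\Theta(\sqrt n)}$. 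The delicate points I expect to wrestle with are: (i) certifying that the reduced-dimension lattice really carries all the difficulty, i.e.\ that its first minimum is within a subexponential factor of $\lambda_1(L_a)$ and that the complementary projection is decodable by rounding even when the $a^{(j)}$ are long; (ii) tracking $e$ faithfully through the digitisation so that the true digit vector $(c_j)$ is the (essentially unique) short solution returned by Babai; and (iii) the bookkeeping when $g:=\gcd(a_1,\dots,a_n,q)>1$, dealt with by passing to the primitive vector $a/g$ and periodicity $q/g$. Granting these, the resulting algorithm is deterministic, runs in $\mathrm{poly}(n)$ time, and solves BDD in $L_a$ for every error radius up to $\lambda_1(L_a)\cdot 2^{-\Theta(\sqrt n)}$.
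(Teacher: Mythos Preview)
Your plan contains a genuine misconception, and it is precisely the sentence in which you dismiss the paper's actual method. You write that ``running LLL and Babai directly on a basis of $L_a$ only yields radius $\lambda_1(L_a)\cdot 2^{-\Theta(n)}$ because of LLL's worst-case Hermite loss.'' That is the \emph{generic} LLL bound, but the whole point of Ducas--van Woerden is that for $q$-ary lattices of finite group rank $k$ the worst case is far better: after LLL on the full $n$-dimensional basis one already has $\min_i\|b_i^*\|\ge\lambda_1(L_a)\cdot\exp\!\big(-\Theta(\sqrt{k\ln q})\big)$, and Babai then decodes to that radius. No dimensionality reduction is performed at all. The argument combines three ingredients specific to the sandwich $q\Z^n\subseteq L_a\subseteq\Z^n$: (i) one can start from a basis with every $\|b_i^*\|\le q$ (Lemma~5.1/Fact~2.1), (ii) the dual inclusion $\hat L_a\subseteq\tfrac1q\Z^n$ bounds the tail partial volumes, giving $\sum_{j>i}\ell_j\le(n-i)\ln q$ (Fact~2.4), and (iii) the Lov\'asz condition bounds the slope of the profile $(\ell_i)$. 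Balancing these forces $\min_i\ell_i\ge\ln\lambda_1-\Theta(\sqrt{k\ln q})$ (Proposition~2.5); with $k=1$ and $q=2^n$ this is exactly the $2^{-\Theta(\sqrt n)}$ you want.

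Your alternative --- a radix-$B$ digitisation of $c$ followed by LLL in a lattice of rank $d=\Theta(\sqrt n)$ --- is not obviously salvageable, and the obstacles you list as (i)--(ii) are real. First, the lattice $q\Z^n+\sum_j\Z a^{(j)}$ is literally $L_a$ again (since $a^{(0)}=a$), so nothing has been reduced; to get a genuine rank-$d$ problem you must project or intersect with $V=\mathrm{span}(a^{(j)})$. Second, the ``plain rounding on the complement'' step is BDD in the lattice $q\cdot\pi_{V^\perp}(\Z^n)$, whose first minimum is \emph{at most} $\lambda_1(L_a)$ (project any shortest vector $qw+ca$ of $L_a$ onto $V^\perp$; the $ca$ part vanishes) and can be strictly smaller, so you have no guarantee that $r=\lambda_1(L_a)\cdot 2^{-\Theta(\sqrt n)}$ is below half of it. Without that, the complementary rounding can return the wrong coset and the rank-$d$ step sees a corrupted target. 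These are not bookkeeping issues; they are exactly the place where your heuristic trade-off $2^{\Theta(d+(\log q)/d)}$ breaks down. The paper sidesteps all of this by never leaving dimension $n$ and instead proving that LLL's profile cannot sag too far on this special class of lattices.
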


Given that Hallgren had mentioned in his talk that their result is generalizable to arbitrary $q$ and $k$, this motivated us to work on generalizing Ducas and van Woerden's proof to arbitrary $q$ and $k$ as well. Following this, we received Eldar and Hallgren's preprint through private correspondence, in which their result was indeed proven for arbitrary $q$ and $k$ \cite{eh21}. While we were working independently on a proof of the classical algorithm for arbitrary $q$ and $k$, we realized later that on October 14, 2021, Ducas and van Woerden had published a second version of their note \cite{dvw21v2}, explaining that their result is generalizable for arbitrary $q$ and $k$:


 \begin{theorem}[Ducas and van Woerden, Version 2 \cite{dvw21v2}]\label{OHAyanishey}
For any given matrix $\mathrm{\mathbf{A}} \in \mathbb{Z}^{n\times k}$, define the lattice $L_\mathrm{\mathbf{A}}=q\mathbb{Z}^n+\mathrm{\mathbf{A}}\mathbb{Z}^k$. There exists a deterministic polynomial-time algorithm that solves BDD in $L_\mathrm{\mathbf{A}}$ for any error up to radius  $\frac{1}{2}\lambda_1(L_\mathrm{\mathbf{A}})\cdot \exp{-\Omega( \sqrt{k \cdot \ln q})}$.
\end{theorem}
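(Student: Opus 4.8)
The plan is to show that the most elementary deterministic toolkit already suffices: compute a basis of $L_{\mathbf A}$, run the LLL basis-reduction algorithm, and then run Babai's nearest-plane algorithm; the entire content is that $q$-periodicity rigidly constrains the Gram--Schmidt profile of the LLL-reduced basis. First I would record the structural facts about $L_{\mathbf A}=q\mathbb Z^n+\mathbf A\mathbb Z^k$: it is a full-rank integer lattice with $q\mathbb Z^n\subseteq L_{\mathbf A}\subseteq\mathbb Z^n$; a basis is obtained in polynomial time from the Hermite normal form of $[\,qI_n\mid\mathbf A\,]$; the vectors $qe_1,\dots,qe_n$ lie in $L_{\mathbf A}$, so $\lambda_1(L_{\mathbf A})\le\lambda_n(L_{\mathbf A})\le q$; and, crucially, $L_{\mathbf A}/q\mathbb Z^n$ is generated by the $k$ columns of $\mathbf A$ modulo $q$, hence is a quotient of $(\mathbb Z/q\mathbb Z)^k$, giving $[L_{\mathbf A}:q\mathbb Z^n]\le q^k$ and therefore
\[
\det(L_{\mathbf A})=\frac{q^n}{[L_{\mathbf A}:q\mathbb Z^n]}\ \ge\ q^{\,n-k}.
\]

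Next, given a target $t=v+e$ with $v\in L_{\mathbf A}$, I would LLL-reduce the basis to get $(b_1,\dots,b_n)$ with Gram--Schmidt vectors $(\tilde b_1,\dots,\tilde b_n)$ and run Babai's nearest-plane algorithm on $t$; both steps are deterministic and run in polynomial time. By the standard guarantee for nearest-plane, its output equals $v$ whenever $\|e\|<\tfrac12\min_i\|\tilde b_i\|$, so the theorem reduces to the Gram--Schmidt inequality
\[
\min_{1\le i\le n}\|\tilde b_i\|\ \ge\ \lambda_1(L_{\mathbf A})\cdot\exp\!\big(-\Omega(\sqrt{k\ln q})\big).
\]

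The crux is this inequality. Writing $r_i=\log_2\|\tilde b_i\|$ and $Q=\log_2 q$, I would combine three ingredients: (i) the LLL slope bound $r_{i+1}\ge r_i-\tfrac12$ together with $\log_2\lambda_1\le r_1=\log_2\|b_1\|\le \log_2\lambda_1+\tfrac{n-1}{2}$; (ii) the volume bound from above, $\sum_i r_i=\log_2\det(L_{\mathbf A})\ge(n-k)Q$, i.e.\ the signed deficit of the profile below the flat level $Q$ is at most $\log_2[L_{\mathbf A}:q\mathbb Z^n]\le kQ$; and (iii) a $q$-periodic upper envelope $r_i\le Q+\tfrac{n-i}{2}$, obtained because the orthogonal projection of $q\mathbb Z^n$ away from $\mathrm{span}(b_1,\dots,b_{i-1})$ is a full-rank sublattice of $\pi_i(L_{\mathbf A})$ of first minimum at most $q$, so LLL-reducedness of the projected block forces $\|\tilde b_i\|\le 2^{(n-i)/2}q$. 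One then splits on the index $j$ of the minimal Gram--Schmidt norm. If $j$ is small, say $j=O(\sqrt{kQ})$, the slope bound gives $\lambda_1\le\|b_1\|\le 2^{(j-1)/2}\|\tilde b_j\|=2^{O(\sqrt{kQ})}\min_i\|\tilde b_i\|$ directly. If $j$ is large, the minimum sits at the bottom of a ``valley'' in the profile whose flank descends at slope at most $\tfrac12$, so a valley of depth $D=Q-r_j$ forces a deficit of area $\Omega(D^2)$; absorbing this against the budget from (ii)--(iii) yields $D=O(\sqrt{kQ})$, hence $\min_i\|\tilde b_i\|=q\,2^{-D}\ge q\,2^{-O(\sqrt{kQ})}\ge\lambda_1\,2^{-O(\sqrt{kQ})}$ using $\lambda_1\le q$. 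Feeding the resulting $\rho=\tfrac12\lambda_1\exp(-\Omega(\sqrt{k\ln q}))$ back into the nearest-plane guarantee completes the proof.

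The main obstacle is precisely the profile analysis: extracting the square-root exponent $\sqrt{k\ln q}$ rather than the trivial $n$ (from raw LLL) or a weaker $k\ln q$ demands a careful extremal argument about quasi-monotone sequences with a fixed sum and a two-sided envelope, and in particular a clean handling of the interaction between the valley's location $j$ and its depth $D$, of the regime $\lambda_1(L_{\mathbf A})\ll q$ (where the baseline that matters is $\lambda_1$, not $q$, and the valley is pushed to the left end of the profile), and of a symmetric lower envelope coming from the integrality of $L_{\mathbf A}$ --- equivalently, from running the same $q$-periodic reasoning on the rescaled dual $q\,L_{\mathbf A}^{*}$, which again has index at most $q^k$ over $\mathbb Z^n$ --- needed to exclude deep valleys invisible to (ii)--(iii) alone. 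An alternative packaging of the same idea, mirroring the dimensionality-reduction step of Eldar--Hallgren's quantum algorithm, is to use the LLL basis to peel off the ``easy'' $q\mathbb Z^n$-like directions and reduce to a genuine BDD instance of dimension $O(\sqrt{k\ln q})$, which is then solved by LLL and Babai in that small dimension.
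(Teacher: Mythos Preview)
Your proposal is correct and follows essentially the same route as the paper: compute a basis, LLL-reduce, apply Babai's nearest-plane algorithm, and reduce everything to showing $\min_i\|\tilde b_i\|\ge\lambda_1\cdot\exp(-O(\sqrt{k\ln q}))$ via a profile analysis using the LLL slope bound, the index bound $[L_{\mathbf A}:q\mathbb Z^n]\le q^k$, and a dual argument for two-sided control.

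The one packaging difference worth noting is how the two-sided control on the profile is obtained. You propose pointwise envelopes (your (iii) $r_i\le Q+(n-i)/2$ from projected LLL, and a symmetric lower envelope from the rescaled dual), then a geometric ``valley of depth $D$ has area $\Omega(D^2)$'' argument against the signed-deficit budget $kQ$; you correctly flag that the signed versus unsigned deficit needs care. The paper instead works with \emph{partial-sum} inequalities: it first invokes the standard lemma that a full-rank set of short vectors (here $qe_1,\dots,qe_n$) yields a basis $\mathbf B'$ with $\|b_i'^*\|\le q$, and then uses the LLL property that partial determinants only decrease under reduction to get $\sum_{j\le i}\ell_j\le i\ln q$; the dual gives $\sum_{j>i}\ell_j\le(n-i)\ln q$. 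Subtracting these at indices $i$ and $i-d$ produces directly $\sum_{j=i-d+1}^{i}\ell_j\ge(d-k)\ln q$, which combined with $\ell_j\le\ell_i+(i-j)\ln\delta'$ gives $\ell_i\ge\ln q-\frac{k}{d}\ln q-\frac{d-1}{2}\ln\delta'$; optimizing $d\approx\sqrt{2k\ln q/\ln\delta'}$ finishes. This partial-sum framework sidesteps the unsigned-deficit subtlety you identify, and your valley picture is exactly the geometric interpretation of the same window computation.
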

Note that in principle, $\mathbf{A}$ is arbitrary, but in order for $k$ to give the finite group rank we generally restrict the columns of $A$ to be linearly independent over $\Z$. From this point on, we sought to elaborate on their result, filling in missing details and unifying a summary of Eldar and Hallgren's algorithm with an elaboration on Ducas and van Woerden's proof for the general case.
 
\subsection{Outline of this Paper}

This paper is structured as follows. In Section~\ref{sec:background}, we introduce the necessary definitions and tools to work with lattices (Section~\ref{sec:21}), as well as the main computational lattice problems that we require for the discussion, mainly SVP, CVP, and BDD (Section~\ref{sec:22}), along with their known hardness results and reductions (Section~\ref{sec:23}). We also provide a survey on other important lattice-based quantum algorithms beyond the result of Eldar and Hallgren (Section~\ref{sec:24}). In Section~\ref{sec:3}, we provide an overview of the quantum algorithm by Eldar and Hallgren, including the necessary quantum primitives (Section~\ref{sec:31}) and the intuition for their proof (Section~\ref{sec:32}). In Section~\ref{sec:classical}, we review the main two classical lattice algorithms, namely the LLL algorithm (Section~\ref{sec:lll}) and Babai's nearest plane algorithm (Section~\ref{sec:babai}). 

As we describe the algorithms and their correctness, we extract the necessary facts for the proof of the classical counterpart to Eldar and Hallgren's quantum algorithm, used in Section~\ref{sec:proof}. Our main proof is split into three parts: we first show how to obtain short bases from full-rank sets of short lattice vectors (Section~\ref{sec:51}), we then turn to dual lattices and their relevant properties (Section~\ref{sec:52}), and lastly we put together all the ingredients of the proof (Section~\ref{sec:lastfacts}).

\section{Background on Lattices, Relations, and Reductions}\label{sec:background}

\subsection{Definitions and Notation}\label{sec:21}

Before turning our attention to BDD, we survey a number of lattice problems and the connections between them. The basic objects of interest are integer lattices:
 
\begin{definition}[Lattice] \label{deneme}
    A lattice $\mathcal{L}$ is generated by all the integer combinations of the vectors of some basis $\textbf{B}$:
    \[
        \mathcal{L} = \sum_{i=1}^m \mathbb{Z}b_i = \Big\{\sum_{i=0}^m z_ib_i, \textrm{ where } z_i \in \mathbb{Z}, b_i \in \mathbf{B} \Big\}.
    \]
\end{definition}

\begin{figure}[h!]
    \centering
    \includegraphics[width=0.7\textwidth]{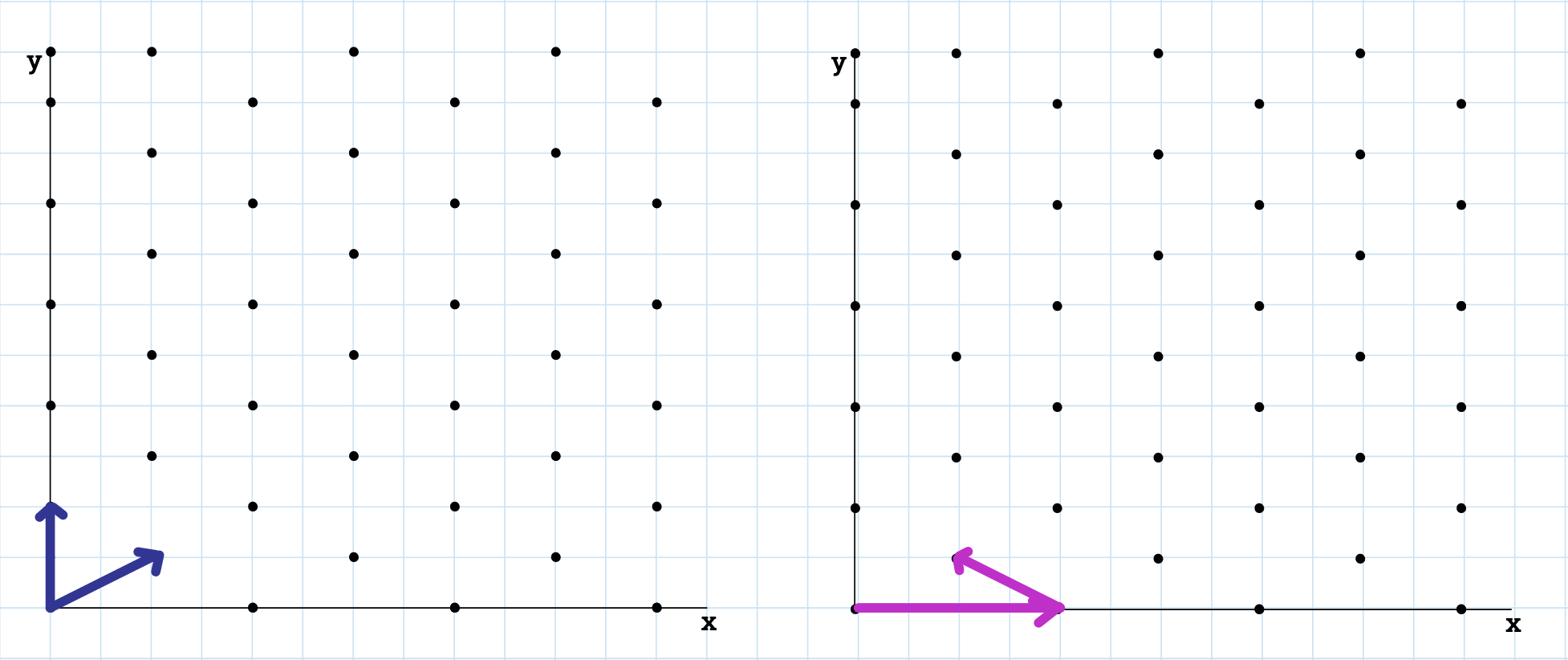}
    \caption{An example of a lattice in $\mathbb{R}^2$, with two bases given which correspond to $\{[2, 1], [0, 2]\}$ and $\{[4, 0], [-2, 1]\}$.}
    \label{fig:my_label}
\end{figure}

Lattices have infinitely many bases, and so it is natural to ask which bases are \textit{better} in some sense. We will see that the two main properties we will consider are that the basis vectors are as \textit{short} and as \textit{orthogonal} as possible. In fact, one of the key insights guiding shortest vector algorithms is that the latter of these properties implies the former. We will discuss this further in Section~\ref{sec:classical}, when we introduce the LLL algorithm.

The volume of the lattice is denoted by $\det(\mathcal{L})$. For any basis $\textbf{B} = \{b_1, \ldots, b_n\}$ for $\mathcal{L}$, the determinant of $\mathcal{L}$ is defined by
\[
    \det(\mathcal{L}) = \sqrt{ \left|\det(\textbf{B}^\top \textbf{B})\right|},
\]
where $\textbf{B}$ is understood as the basis matrix constructed with the vectors $b_i$ as columns. If $\mathcal{L}$ is full-rank, then \textbf{B} is a square matrix, so we can also think of the volume of the lattice as $|\det(\textbf{B})|$. Geometrically, the determinant of the lattice is the volume of the parallelepiped spanned by the $b_i$'s \cite{epfl}. It is important to remark that the volume of the lattice is an invariant; that is, does not depend on the choice of the basis. This is because any two different bases $\textbf{B}_1, \textbf{B}_2$ for the same lattice $\mathcal{L}$ are related by a unimodular matrix \cite{epfl}. Specifically, there exists a matrix $U$ with $\det(U) = \pm 1$ such that
\[
    \textbf{B}_1 = U \cdot \textbf{B}_2.
\]
Hence it follows that $|\det(\textbf{B}_1)| = |\det(\textbf{B}_2)|$.

We will return to the topic of volume of lattices in Section~\ref{sec:classical}, when we further analyze the properties of $\det(\mathcal{L})$.

\subsection{SVP, CVP, and BDD}\label{sec:22}

Long before lattices were used in cryptography, computational problems in the field of lattice theory had been considered in other contexts. The two main ones are the Shortest Vector Problem (SVP) and the Closest Vector Problem (CVP):

\begin{definition}[$\gamma$-SVP \cite{lucaslec}]
    Given a lattice $\mathcal{L} \subset \mathbb{R}^n$, find a lattice vector $x \in \mathcal{L} \setminus {0}$ such that
    \[
        ||x|| \leq \gamma(n) \cdot \lambda_1(\mathcal{L}),
    \]
    where $\lambda_1(\mathcal{L})$ denotes the length of the shortest vector in the lattice.
\end{definition}

\begin{definition}[$\gamma$-CVP \cite{noahcvp}]
    Given a lattice $\mathcal{L} \subset \mathbb{R}^n$ and a target vector $t \in \mathbb{R}^n$, find a lattice vector $x \in \mathcal{L}$ such that
    \[
        ||x-t|| \leq \gamma(n) \cdot \textrm{dist}(\mathcal{L}, t),
    \]
    where $\textrm{dist}(\mathcal{L}, t) = \min_{x \in \mathcal{L}} ||x-t||$. See Figure~\ref{fig:cvp} for a diagram.
\end{definition}

\begin{figure}[h!]
    \centering
    \includegraphics[width=0.3\textwidth]{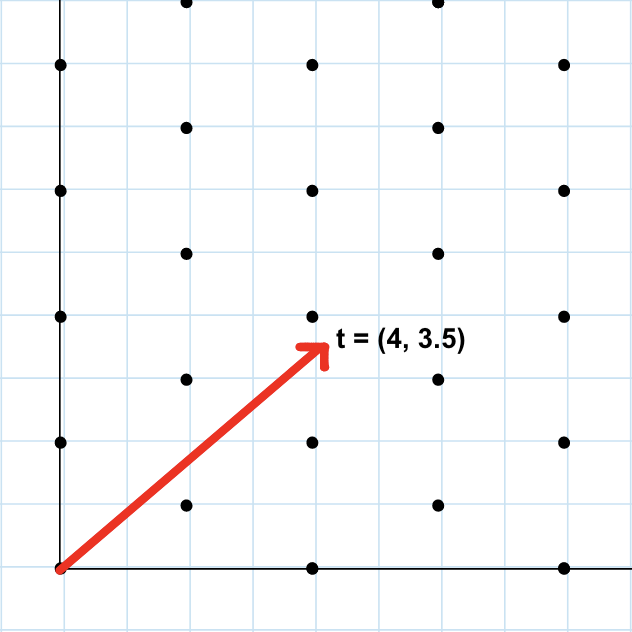}
    \caption{An instance of the CVP, with target vector $t = (4, 3.5)$ and solution $(4, 4)$. Note that the shortest vector in this lattice is $(0, 2)$.}
    \label{fig:cvp}
\end{figure}

These problems have many variants. In addition to the approximation versions presented above, both $\gamma$-SVP and $\gamma$-CVP have exact analogues (called SVP and CVP respectively), where the goal is to find the shortest/closest vector instead of approximating it. And, $\gamma$-SVP also has a decision-based version called $\gamma$-GAP-SVP:
\begin{definition}[$\gamma$-GAP-SVP \cite{reg05}] 
   Given a lattice $\mathcal{L} \subset \mathbb{R}^n$ and $d > 0$, output $\textsf{YES}$ if $\lambda_1(\mathcal{L}) \le d$ and $\textsf{NO}$ if  $\lambda_1(\mathcal{L}) \ge \gamma(n) \cdot d$.
\end{definition} 

A slightly different problem, considered by Eldar and Hallgren \cite{h21}, is Bounded Distance Decoding (BDD), which is concerned with recovering a lattice vector given a target vector with a slight error.

\begin{definition}[Bounded Distance Decoding \cite{dvw21v2}]
Given a lattice $\mathcal{L} \subset \mathbb{R}^n$, a radius $\frac{\lambda_1(\mathcal{L})}{2} > r > 0,$ and a vector $t = v + e$ with $\|e\| < r$ and $v \in \mathcal{L},$ output the lattice vector $v.$
\end{definition}
Note that the condition $r < \frac{\lambda_1(\mathcal{L})}{2}$ guarantees a unique solution. BDD is the main problem that we will be studying in this paper, since both the quantum and classical algorithms discussed below give subexponential approximations. As we will develop in Section~\ref{sec:classical}, there are classical polynomial-time algorithms that can approximate SVP and CVP for $\gamma(n) = 2^{O(n)}$; namely, the LLL algorithm and Babai's nearest plane algorithm. More concretely, the LLL algorithm solves $\gamma$-SVP for $\gamma(n) = 2^{(n-1)/2}$, whereas Babai's nearest plane algorithm solves $\gamma$-CVP for $\gamma(n) = 2^{n/2}$.\footnote{By tuning the parameters of the algorithm, it is actually possible to improve the approximation factors to $\gamma(n) = (2/\sqrt{3})^n$ and $2(2/\sqrt{3})^n$, respectively \cite{regevcvp}. This approximation factor can be further improved with BKZ bases \cite{lll}.} One can then use the same algorithm by Babai to solve BDD whenever $2^{n/2}\cdot dist(\mathcal{L}, t) \leq \lambda_1/2$, given that in that case the answer is unique.

\subsection{Lattice Hardness Results}\label{sec:23}

There are several important intractability results about such computational lattice problems. In 1981, van Emde Boas proved that CVP is NP-hard, but the same result did not immediately extend to SVP \cite{boas}. In fact, $\gamma$-CVP is now known to be NP-complete for any $\gamma \leq n^{c/\log \log n}$. In 1997, Ajtai proved that SVP is NP-hard to solve exactly under randomized reduction \cite{ajtai}, which provided the first theoretical evidence that SVP is intractable. Later on, Micciancio proved that SVP is NP-hard to approximate with any factor less than $\sqrt{2}$ \cite{Micciancionphard}. Lastly, Goldreich, Micciancio, Safra, and Seifer showed that SVP is not harder than CVP \cite{gmss99}. 
Additionally, the various lattice problems are connected by a network of reductions. Figure~\ref{fig:reductions} contains an overview of the major dimension-preserving reductions.   

\begin{figure}[h!]
    \centering
    \includegraphics[width=0.9\textwidth]{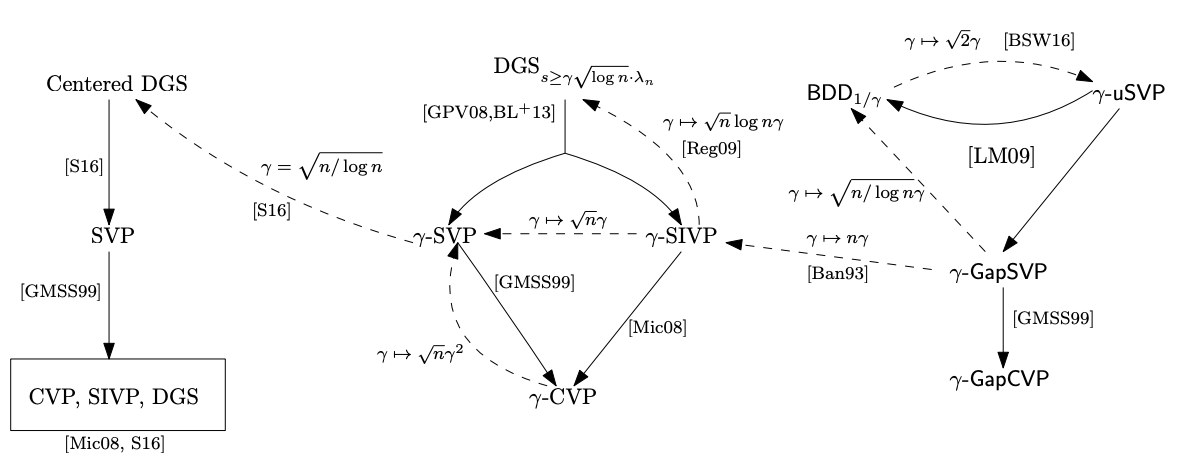}
    \caption{A summary of significant dimension-preserving reductions between various lattice problems, where $A \rightarrow B$ means $A$ reduces to $B$. Solid (dotted) arrows denote a reduction that preserves (changes) the approximation factor $\gamma$. Source: \cite{latticereds}.}
    \label{fig:reductions}
\end{figure}

While interesting in their own right, lattice problems are especially relevant for cryptography. Currently, the best candidates for post-quantum cryptography are lattice-based encryptions, and the security of these schemes relies on the conjectured hardness of these lattice problems. When Learning With Errors (LWE) was first introduced by Regev he proved its security by reducing it to $\gamma$-GAP-SVP: 

\begin{theorem}[Regev, Hardness of Learning With Errors \cite{reg05}]
Define $LWE_{p, \chi}$ as follows: given a series of equations of the form 
$$\langle s, a_i \rangle + e_i = b_i$$
where $s\in \mathbb{Z}_p^n$, the $a_i$ are independently and uniformly chosen from $\mathbb{Z}_p^n,$ $b_i \in \mathbb{Z}_p$, and the $e_i$ are independently chosen from $\chi,$ recover $s.$ Then, if $\alpha \in (0,1)$ is such that $\alpha p > 2\sqrt{n}$, and there is an efficient algorithm that can solve $LWE_{p, \Bar{\Psi}_\alpha},$ then there exists an efficient quantum algorithm that solves $\tilde{O}(n/\alpha)$-GAP-SVP in the worst case. 
\end{theorem}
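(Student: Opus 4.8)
The plan is to build, from any efficient LWE solver, an efficient \emph{quantum} algorithm for $\tilde{O}(n/\alpha)$-GAP-SVP, following an iterative, self-improving reduction. The organizing invariant is the ability to produce samples from the discrete Gaussian distribution $D_{L,r}$ on the input lattice $L$, and to repeatedly shrink the width $r$ until the samples themselves reveal $\lambda_1(L)$ up to the claimed factor. Two reformulations make this possible. First, an LWE instance with secret $s$ and samples $(a_i,b_i)$, $b_i=\langle s,a_i\rangle+e_i \bmod p$, is essentially a bounded-distance-decoding instance on a ``$p$-ary'' lattice $\{x : x\equiv A^{\top}s \pmod p\}$; hence an LWE oracle yields a BDD solver for such lattices, and (after the Gaussian-sample bookkeeping below) for the dual lattice $L^{*}$ at decoding radius proportional to $\alpha p / r$. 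Second, a BDD/closest-vector solver on $L^{*}$ can be used \emph{quantumly} to prepare Gaussian-weighted superpositions over $L^{*}$, which the lattice Fourier transform converts into Gaussian superpositions over $L$ with a \emph{smaller} width.

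The heart of the argument is the iterative step: given $\mathrm{poly}(n)$ samples from $D_{L,r}$ with $r$ above roughly $\sqrt{2}\,p\,\eta_{\epsilon}(L)$ (the smoothing parameter, $\epsilon$ negligible) together with the LWE oracle, we produce samples from $D_{L,r'}$ for $r' = r\sqrt{n}/(\alpha p)$; the hypothesis $\alpha p>2\sqrt{n}$ forces $r'<r/2$, so each application contracts the width. The step has a classical half and a quantum half. In the classical half, one turns each sample $v\sim D_{L,r}$ into an LWE vector $a = v \bmod pL$ under the identification $L/pL\cong\mathbb{Z}_p^{n}$, forms $b$ by pairing $v$ against the perturbation $t = u+e$ to be decoded, and checks via Banaszczyk-type tail bounds that the induced noise $\langle v,e\rangle$ (with $\|e\|\lesssim \alpha p/r$) is within negligible statistical distance of $\bar\Psi_{\alpha}$, so the LWE oracle returns a secret encoding $u$, i.e.\ solves BDD on $L^{*}$ at radius $\sim\alpha p/r$. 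In the quantum half, one uses this BDD solver to coherently prepare a state negligibly close to $\sum_{x\in L^{*}}\rho_{\alpha p/(\sqrt{2}\,r)}(x)\,|x\rangle$, applies the quantum Fourier transform over the appropriate finite group to obtain $\sum_{y\in L}\rho_{r'}(y)\,|y\rangle$, and measures to output a fresh sample from $D_{L,r'}$.

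To finish, we bootstrap. An LLL-reduced basis of $L$ lets us sample classically from $D_{L,r_0}$ for some $r_0 = 2^{O(n)}\lambda_n(L)$, and we invoke the iterative step $O(n)$ times, keeping the pool of live samples polynomial (each output sample costs only a fixed polynomial number of inputs, so a large enough polynomial starting pool survives all rounds). We stop once $r$ has decreased to $O(\sqrt{2}\,p\,\eta_{\epsilon}(L))$; at that point standard transference and smoothing inequalities (e.g.\ $\eta_{\epsilon}(L)\le \sqrt{n}/\lambda_1(L^{*})$) imply that $\mathrm{poly}(n)$ samples from $D_{L,r}$, or one last run of the classical BDD step at radius $\tilde{O}(\alpha p/r)$, determine $\lambda_1(L)$ up to a factor $\tilde{O}(n/\alpha)$, which in particular decides $\gamma$-GAP-SVP for $\gamma = \tilde{O}(n/\alpha)$.

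I expect the main obstacle to be the quantum state-preparation step and its error analysis: one must prepare a Gaussian superposition over a fine discretization of $\mathbb{R}^{n}/L^{*}$, compute the BDD function into an ancilla using the oracle, uncompute the perturbation so the ancilla disentangles cleanly, and then bound the trace distance of the resulting state from the ideal dual-lattice Gaussian so that the subsequent Fourier transform and measurement yield a distribution within negligible statistical distance of $D_{L,r'}$. This is exactly where the Poisson-summation/smoothing machinery and the quantitative role of $\alpha p>2\sqrt{n}$ enter, and where the negligible-$\epsilon$ accounting must be threaded through every iteration. A secondary but nontrivial obstacle is confirming that the classical embedding really produces a noise distribution statistically close to $\bar\Psi_{\alpha}$, and that the sample complexity does not blow up across the $O(n)$ rounds.
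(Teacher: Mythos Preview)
The paper does not prove this theorem at all: it is stated as a citation of Regev's result \cite{reg05}, followed only by a one-sentence gloss on the noise distribution $\bar\Psi_\alpha$. There is therefore no ``paper's own proof'' to compare your proposal against.

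That said, your proposal is a faithful high-level outline of Regev's original argument: the classical LWE-to-BDD reduction on the dual lattice, the quantum step that uses the BDD oracle to uncompute the perturbation and prepare a dual Gaussian superposition, the Fourier transform back to a narrower primal Gaussian, and the $O(n)$-round iteration initialized from an LLL basis. Your identification of the delicate points (the uncomputation/trace-distance analysis in the state-preparation step, and the statistical closeness of the induced noise to $\bar\Psi_\alpha$) matches where the technical work in \cite{reg05} actually lies. If you intend to supply a proof where the paper gives none, this is the right skeleton; but as a comparison exercise there is nothing in the present paper to compare it to.
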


The distribution $\Bar{\Psi}_\alpha$ is described at depth in the paper, but it is essentially a discrete Gaussian centered at 0 with standard deviation $\alpha p$.

Crucially, $\gamma$-GAP-SVP is conjectured to be hard to solve, i.e., no polynomial time algorithm can solve it for a polynomial $\gamma$. And, Micciancio proved that for $\gamma(n) \le \sqrt{2},$ $\gamma$-GAP-SVP  is NP-Hard \cite{Micciancionphard}. Though Regev's reduction is quantum, Peikert demonstrated a classical reduction in \cite{Peikert09}, and in 2009, Lyubashevsky and Micciancio \cite{LM09} reduced $\gamma$-GAP-SVP to BDD: 
\begin{theorem}[Theorem 7.1 in \cite{LM09}] 
For any $\gamma > 2\sqrt{n / \log{n}}$, there is a polynomial time Cook reduction from $\gamma$-GAP-SVP to $\tfrac{1}{\gamma}\sqrt{n/\log{n}}$-BDD.
\end{theorem}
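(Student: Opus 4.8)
The plan is to build a randomized (Cook) reduction which, given a $\gamma$-GapSVP instance $(\mathcal{L},d)$, calls the $\tfrac1\gamma\sqrt{n/\log n}$-BDD oracle $\mathcal{O}$ on polynomially many derived targets in $\mathcal{L}$ and decides whether $\lambda_1(\mathcal{L})\le d$ or $\lambda_1(\mathcal{L})\ge\gamma d$. Write $\alpha:=\tfrac1\gamma\sqrt{n/\log n}$; the hypothesis $\gamma>2\sqrt{n/\log n}$ is exactly what gives $\alpha<\tfrac12$, so that $\mathcal{O}$ has a well-defined answer whenever the target lies within $\alpha\lambda_1(\mathcal{L})$ of $\mathcal{L}$. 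The key asymmetry I would exploit is that this decoding radius is $\alpha\lambda_1(\mathcal{L})\ge\sqrt{n/\log n}\cdot d$ in the \textsf{NO} case but only $\alpha\lambda_1(\mathcal{L})\le\tfrac1\gamma\sqrt{n/\log n}\cdot d\ll d$ in the \textsf{YES} case, so a perturbation of norm of order $\sqrt{n/\log n}\cdot d$ falls strictly inside the decoding radius in the \textsf{NO} case and far outside it in the \textsf{YES} case.

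Concretely, I would repeat the following trial $\mathrm{poly}(n)$ times. Sample a planted point $v\in\mathcal{L}$ from a distribution that is statistically close to being invariant under translation by any fixed lattice vector of polynomially bounded coordinates (for instance $v=\mathbf{B}c$ with $c$ uniform in $\{0,\dots,2^{\mathrm{poly}(n)}\}^n$), and sample a perturbation $e$ from a continuous Gaussian of width $\sigma$ with $\sigma\sqrt n$ of order $\sqrt{n/\log n}\cdot d$ but strictly below $\sqrt{n/\log n}\cdot d$, so that $\sigma$ is of order $d/\sqrt{\log n}$ and $\|e\|$ concentrates just under $\sqrt{n/\log n}\cdot d$. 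Query $v'\leftarrow\mathcal{O}(\mathcal{L},v+e)$ and record the lattice vector $v'-v$. Output \textsf{YES} if any recorded $v'-v$ is nonzero and has length below $\gamma d$, and \textsf{NO} otherwise. Soundness and the \textsf{NO} case are the easy direction: when $\lambda_1(\mathcal{L})\ge\gamma d$ we have $\|e\|<\sqrt{n/\log n}\cdot d\le\alpha\lambda_1(\mathcal{L})$, so the oracle's promise is met, and since moreover $\|e\|<\tfrac12\lambda_1(\mathcal{L})$ (as $\alpha<\tfrac12$) the unique closest vector to $v+e$ is $v$ itself; hence every query returns $v'=v$, no certificate is produced, and the reduction correctly outputs \textsf{NO}. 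More generally any nonzero vector we ever extract from $\mathcal{L}$ has length $\ge\lambda_1(\mathcal{L})\ge\gamma d$, so \textsf{YES} is never output on a \textsf{NO} instance no matter how $\mathcal{O}$ behaves off-promise.

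The substance is the \textsf{YES} case: one must show that when $\lambda_1(\mathcal{L})\le d$ the reduction outputs \textsf{YES} with probability bounded away from $0$. The intended mechanism is that a shortest vector $s$ (with $\|s\|\le d$) makes the planting ambiguous: the law of $v+e$ when $v$ is planted is identical, up to the statistical distance between the Gaussian noise and its translate by $s$, to the law of $v+e$ when $v+s$ is planted instead; and for $\sigma$ of order $d/\sqrt{\log n}$ this statistical distance is $1-2^{-O(\log n)}=1-1/\mathrm{poly}(n)$, so with constant probability over the $\mathrm{poly}(n)$ independent trials at least one query is a point on which no oracle can recover the planted vector with probability better than about $\tfrac12$; on such a trial one sees $v'\ne v$, and in the favorable event that the oracle behaves as a correct decoder for the alternative planting, $v'-v=\pm s$ has length $\le d<\gamma d$, giving the certificate. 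It is precisely the fact that only an \emph{inverse-polynomial} rather than constant overlap between the noise and its $s$-translate is required --- bridged by $\mathrm{poly}(n)$ repetitions --- that lets the noise radius be pushed up to order $\sqrt{n/\log n}\cdot d$ rather than only order $\sqrt n\cdot d$, which is where the $\sqrt{n/\log n}$ in the statement comes from.

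The main obstacle, I expect, is making this \textsf{YES}-case argument robust against an \emph{adversarial} oracle: off-promise (which is the regime of all \textsf{YES}-case queries) the oracle may return a third, unhelpful lattice vector rather than one of the two plausible plantings, so one cannot simply argue that ``it must return $v$ or $v+s$''. The standard fix, in the spirit of Regev's and Peikert's worst-case-to-average-case reductions, is to arrange that the distribution of queries fed to $\mathcal{O}$ is statistically close regardless of whether the underlying instance is \textsf{YES} or \textsf{NO}, so that the oracle's obligation to answer correctly on the (in-promise) \textsf{NO}-type queries forces informative behavior on the \textsf{YES}-type queries as well; carrying this out requires bounding the smoothing parameter of $\mathcal{L}$ and choosing $\sigma$ in the narrow window that simultaneously keeps $\|e\|$ below $\alpha\lambda_1(\mathcal{L})$ in the \textsf{NO} case and the noise/translate overlap at least $1/\mathrm{poly}(n)$ in the \textsf{YES} case. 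Verifying that this window is nonempty --- and that the factor $2$ in $\gamma>2\sqrt{n/\log n}$, needed for $\alpha<\tfrac12$, suffices --- is the delicate quantitative core of the proof; the remaining ingredients, namely translation-(near-)invariance of the planting distribution and the elementary Gaussian tail estimate giving statistical distance $1-1/\mathrm{poly}(n)$, are routine.
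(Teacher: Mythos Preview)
The paper does not prove this theorem; it is quoted from \cite{LM09} as background and stated without proof, so there is no argument in the present paper to compare against.

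On its own merits, your proposal follows the right template --- it is the Goldreich--Goldwasser protocol with the BDD oracle as prover, which is exactly how \cite{LM09} proceeds --- and your \textsf{NO}-case analysis and the identification of the $\sqrt{n/\log n}$ factor as arising from an inverse-polynomial overlap are both correct. The gap is in the \textsf{YES} case. Your acceptance test requires $\|v'-v\|<\gamma d$, but an off-promise oracle may always return a lattice point farther than $\gamma d+r$ from $t$, forcing $\|v'-v\|\ge\gamma d$ on every trial; and your proposed remedy (match the \textsf{YES}/\textsf{NO} query distributions via the smoothing parameter) does not address this, since those distributions already coincide --- the law of $t=v+e$ is determined by $\mathcal{L}$, not by $\lambda_1(\mathcal{L})$ --- and the oracle's off-promise freedom comes from the on-promise region being tiny in the \textsf{YES} case, not from any ability to distinguish the two cases. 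The actual fix in \cite{LM09} is simpler and already latent in your setup: drop the length test and output \textsf{YES} as soon as any trial gives $v'\ne v$ (equivalently, first discard any oracle answer with $\|v'-t\|>r$, after which $\|v'-v\|\le 2r<\gamma d$ is automatic). Then \textsf{NO}-soundness is unchanged, and for \textsf{YES}-completeness your own randomization of $v$ already gives an information-theoretic bound: by translation-invariance of the law of $v$, $\bigl|\Pr[\mathcal{O}(t)=v]-\Pr[\mathcal{O}(t)=v-s]\bigr|$ is at most the statistical distance $1-\rho$ between the noise and its $s$-shift, and since these two events are disjoint, $\Pr[\mathcal{O}(t)=v]\le 1-\rho/2$. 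Hence $\Pr[v'\ne v]\ge\rho/2\ge 1/\mathrm{poly}(n)$ against every oracle, with no smoothing-parameter machinery needed.
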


Elsewhere in the paper, they proved the opposite direction of the reduction as well, meaning that the conjectured hardness of $\gamma$-GAP-SVP is equivalent to the hardness of BDD. It is precisely because of this link between the security of LWE and BDD that further study of Eldar and Hallgren's algorithm for BDD is so important.

\subsection{Survey of Quantum Approaches to Lattice Problems}\label{sec:24}

Before Eldar and Hallgren's algorithm for BDD, a number of other quantum algorithms have been proposed for different lattice problems. For instance, Kuperberg describes in \cite{copernicus} a quantum algorithm utilizing Quantum Fourier Transform (QFT) to solve the dihedral hidden subgroup problem (DHSP) that has time complexity $2^{O(\sqrt{\log N})}$ and takes $2^{O(\sqrt{\log N}) }$ quantum space, where $2N$ is the order of the dihedral group $D_N$. Regev applies a similar quantum algorithm to the unique shortest vector problem (unique SVP) in \cite{reg04} and proposes a space-efficent alternative to Kuperberg's algorithm in \cite{reg06}, taking only $O(\log{N})$ quantum space instead of $2^{O(\sqrt{\log N})}$. Kuperberg then generalizes Regev's algorithm in \cite{copernicus2},  adding two parameters that (1) allows the algorithm to take less space at the cost of more quantum time and (2) allows the algorithm take less quantum time at the cost of more  classical space and time, as long as the the classical space has quantum access. However, the composition of Kuperberg and Regev's algorithm does \textit{not} yield a subexponential algorithm for SVP. The reason is that Regev's reduction produces a quadratic blowup in the input size. 

Overall, these papers are good examples of how quantum algorithms can be used in order to improve classical approaches to lattice-based problems by providing time and space efficient variations, which we imagine was also what motivated Eldar and Hallgren's work.

\section{Overview of Eldar and Hallgren's Algorithm}\label{sec:3}

We now give an overview of some of the key points in Eldar and Hallgren's proof of Theorem~\ref{OHAyani} \cite{eh21}. Eldar and Hallgren parametrize lattices by the following three integers:
\begin{enumerate}
    \item Lattice dimension $n$: $n$ is the dimension of the space $\Z^n$ in which our lattice is embedded. We consider only full rank lattices, i.e., those for which the cardinality of the basis $\mathbf{B}$ is also $n$, since any lattice can be reduced to a full-rank lattice in a lower-dimensional $\Z^m$.
    
    \item Periodicity $q$: $q$ is defined as the minimal integer such that $q \Z^n \subset \mathcal{L}$. In other words, it is the minimal integer such that $q e_i \in \mathcal{L}$ for every standard basis vector $e_i \in \Z^n$. Such an integer must exist. To see why, note that there must be some minimal integer $q_i$ such that $q_i e_i \in \mathcal{L}$, because otherwise we could extend our basis $\textbf{B}$ by $e_i$, which is impossible since $\mathcal{L}$ is full-rank. Then let $q = \text{LCM}(\{q_i\})$.
    
    \item Finite group rank $k$: Since $q \Z^n$ is a subgroup of the abelian group $\mathcal{L}$, we can consider the quotient $\mathcal{L} \text{ mod }q \equiv \mathcal{L} / q \Z^n$, obtained by taking each standard basis coefficient of a lattice vector mod $q$. This is a finite abelian group since $\Z_q$ is finite, so by the structure theorem for finitely generated abelian groups we can decompose $\mathcal{L} \text{ mod }q$ as $\Z_{q_1} \times \cdots \times \Z_{q_k}$ for $q_i \mid q_{i+1}$. Moreover, $k$ is the group rank of $\mathcal{L} \text{ mod }q$ and can be thought of as the dimension of the reduced lattice, analogous to the security parameter in LWE. Letting $\textbf{G}$ give a basis for $\mathcal{L} \text{ mod }q$, it follows that the matrix $[\textbf{G} | q \textbf{I}]$ for an $n \times n$ identity $\textbf{I}$ generates the full lattice $\mathcal{L}$ (but is not a basis; it is too large to be a linearly independent set).
\end{enumerate}
Now that we understand the claim, we begin with a review of some major quantum primitives used in the construction.

\subsection{Quantum Primitives}\label{sec:31}

There are only two quantum primitives necessary to understand Eldar and Hallgren's algorithm, and one is derived from the other. First is the Quantum Fourier Transform (QFT), a particular basis change for quantum states. 
\begin{definition}
    Let $\ket{x}$ for $0 \le x < q$ denote a basis state for some $q$-dimensional Hilbert space $\mathcal{H}$ of quantum states. The Quantum Fourier Transform (QFT) over the cyclic group $\Z_q$, denoted $F_q$, is a linear map defined by
    $$
    F_q: \ket{x} \mapsto \frac{1}{\sqrt{q}} \sum_{i=0}^{q-1} \omega_q^{ix} \ket{i},
    $$
    where $\omega_q$ denotes a primitive $q$th root of unity.
\end{definition}

The QFT is computable in poly(log($q$)) time using only Hadamard and controlled phase rotations. Note that this definition differs from that presented in class and \cite{boaz}, which uses only Hadamard gates and no controlled operations, but conforms to the definition most often used in physics (by leaving out the controlled phase rotations we obtain what physicists refer to as the \textit{Hadamard transformation}). One of the main uses of the QFT is as the basic building block for the Quantum Phase Estimation (QPE) algorithm:
\begin{theorem}[Pg. 225 in \cite{mikeandike}]
    There exists a quantum algorithm which, given a unitary $U$, and an eigenstate $\ket{u}$ of $U$ with eigenvalue $e^{2 \pi i \phi_u}$, outputs an $n$-bit approximation $\tilde{\phi}_u$ to $\phi_u$ with probability $1 - \epsilon$ using $t = n + O(\log(1/\epsilon))$ ancilla qubits and $O(t^2)$ operations.
\end{theorem}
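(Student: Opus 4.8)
The plan is to exhibit the textbook phase-estimation circuit and then control its failure probability with a tail estimate. First I would initialize a $t$-qubit ancilla register in $\ket{0}^{\otimes t}$ and a second register in the eigenstate $\ket{u}$, apply a Hadamard to each ancilla qubit to obtain $\frac{1}{\sqrt{2^t}}\sum_{j=0}^{2^t-1}\ket{j}\otimes\ket{u}$, and then apply, for each ancilla qubit $k\in\{0,\dots,t-1\}$, a controlled-$U^{2^k}$ gate with the $k$-th ancilla as control and the $\ket{u}$-register as target. Since $U\ket{u}=e^{2\pi i\phi_u}\ket{u}$, the second register is never disturbed and the branch $\ket{j}$ acquires the phase $e^{2\pi i\phi_u j}$ (phase kickback), leaving the state $\bigl(\tfrac{1}{\sqrt{2^t}}\sum_{j=0}^{2^t-1}e^{2\pi i\phi_u j}\ket{j}\bigr)\otimes\ket{u}$. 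I would then apply the inverse QFT $F_{2^t}^{-1}$ to the ancilla register, measure it in the computational basis to obtain an integer $m$, and output $\tilde\phi_u=m/2^t$.

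Correctness is where the work lies. When $\phi_u$ is an exact $t$-bit fraction $m_0/2^t$ the ancilla state equals $F_{2^t}\ket{m_0}$, so $F_{2^t}^{-1}$ returns $\ket{m_0}$ and the measurement is exact. In general, summing the geometric series gives the outcome-$m$ amplitude $\alpha_m=\frac{1}{2^t}\cdot\frac{1-e^{2\pi i(2^t\phi_u-m)}}{1-e^{2\pi i(\phi_u-m/2^t)}}$, whose numerator equals $1-e^{2\pi i\,2^t\phi_u}$ independently of $m$. Writing $b=\lfloor 2^t\phi_u\rfloor$, so that $0\le\phi_u-b/2^t<2^{-t}$ and $b/2^t$ already agrees with $\phi_u$ to $t\ge n$ bits, the only way to fail to output an $n$-bit approximation is to measure an $m$ with $|m-b|>e$, where $e:=2^{t-n}-1$. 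I would bound $\Pr[\text{fail}]\le\sum_{e<|m-b|\le 2^{t-1}}|\alpha_m|^2$ by combining $|1-e^{i\theta}|\le 2$ in the numerator with $|1-e^{i\theta}|\ge 2|\theta|/\pi$ (valid for $|\theta|\le\pi$) in the denominator; this bounds each term by roughly $\frac{1}{4(|m-b|-1)^2}$, so the whole sum is at most $\frac{1}{2(e-1)}$. Requiring $\frac{1}{2(e-1)}\le\epsilon$ forces $2^{t-n}\ge 2+\frac{1}{2\epsilon}$, i.e. $t=n+\lceil\log(2+\tfrac{1}{2\epsilon})\rceil=n+O(\log(1/\epsilon))$.

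For the resource count, the ancilla register uses exactly $t=n+O(\log(1/\epsilon))$ qubits, and the circuit consists of $t$ Hadamards, $t$ controlled powers of $U$ (each counted as a single black-box invocation, as is standard for phase estimation), and one inverse QFT, which by the discussion above is implementable with $O(t^2)$ Hadamard and controlled-phase gates; hence $O(t^2)$ operations overall. The main obstacle is the tail bound in the second paragraph: converting the closed-form amplitude into a usable estimate requires the two trigonometric inequalities and a careful split of the ``bad'' outcomes into those above and below $b$, and it is precisely this computation that pins down the overhead $t-n=O(\log(1/\epsilon))$. The circuit description and the phase-kickback identity, by contrast, follow directly from the definition of the QFT and from $\ket{u}$ being an eigenvector of $U$.
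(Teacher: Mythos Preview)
Your proposal is correct and is exactly the standard Nielsen--Chuang argument the paper is citing. Note, however, that the paper does not actually prove this theorem: it states the result with a citation to \cite{mikeandike} and follows it with only a two-sentence sketch (controlled unitary powers put the ancilla register into a QFT state, after which an inverse QFT recovers an approximation of the phase). Your write-up supplies the full circuit description, the phase-kickback identity, the geometric-series amplitude, the trigonometric tail bound yielding $t=n+O(\log(1/\epsilon))$, and the $O(t^2)$ gate count for the inverse QFT---all of which the paper simply defers to the textbook. So there is nothing to compare beyond observing that you have reproduced the cited proof rather than an alternative one.
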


The algorithm works by first applying a series of controlled unitary powers to the ancilla qubits, which transforms our ancilla register into a QFT state. We can then apply the inverse QFT to obtain an approximation of the phase of the unitary eigenvalue.

With these quantum primitives covered, we can understand the basic arguments behind Eldar and Hallgren's algorithm, presented below.

\subsection{Intuition for the Proof}\label{sec:32}

Eldar and Hallgren's algorithm reduces worst-case BDD on a lattice $\mathcal{L}$ in $\mathcal{L}(n,q,k)$ to \textit{$\tilde{\epsilon}$-random-BDD}, which is BDD on a random lattice $\tilde{\mathcal{L}}$ with basis $[\tilde{\textbf{B}} | q \textbf{I}]$ for $\tilde{\textbf{B}} \in \Z_q^{m \times k}$ selected randomly. $\tilde{\mathcal{L}}$ has shortest vector of length $\tilde{\lambda}_1$, and the transformed target vector $\tilde{t}$ is at most $\tilde{\lambda}_1 \tilde{\epsilon}$ away from a lattice vector in $\tilde{\mathcal{L}}$.  For $m = \sqrt{k \log q}$, we can solve BDD on this lattice with approximation factor $\tilde{\epsilon} = 2^{-\sqrt{k \log q}}$ using Babai's nearest plane algorithm.

Before obtaining this random lattice, we consider a related, long-standing problem in quantum state representations of a lattice. For decades, researchers have considered the goal of generating, given a lattice $\mathcal{L}$ with coefficients in $\Z_q$ with basis $\mathbf{B}$, the state $\ket{\psi_0} = \sum_{v \in \mathcal{L}} \sum_{z \in C} \ket{v+z}$ where $C$ is some extended shape, say a cube (more on why this is desirable below). We approximate $\ket{\psi_0}$ in the following way. Begin by computing the state $\sum_c \ket{c} \otimes \sum_{z \in C} \ket{z}$ where $c$ ranges over all possible coefficients of lattice vectors in $\mathcal{L}$. We can then perform entangling gates to obtain $\sum_c \sum_{z \in C} \ket{c, \mathbf{B}c + z}$. We want the state $\sum_c \sum_{z \in C} \ket{0, \mathbf{B}c + z}$, but computing this would require us to know the $c$ corresponding to each $\mathbf{B}c+z$, i.e., to solve BDD, which is exactly the problem we are trying to solve in the first place!

We cannot do this, but we can perform a Quantum Fourier Transform and measure the first register to compute $\ket{\psi_a} = \sum_c \omega_q^{c \cdot a} \sum_{z \in C} \ket{\mathbf{B}c + z}$ for $a$ a random vector (random by the nature of quantum measurement). The problem of computing $c$ from $\mathbf{B}c+z$ is therefore reduced to the problem of determining the phases $\omega_q^{c \cdot a}$. The key observation is that the state $\ket{\psi_a}$ is an eigenvector of a shift by a lattice vector. For $x \in \Z_q^n$, let $U_x$ be the shift unitary defined by $U_x \ket{y} = \ket{y+x}$. Then if we shift by the closest lattice vector $\mathbf{B} c' \in L$ to $t$, we obtain $U_{Bc'} \ket{\psi_a} = \omega_q^{c' \cdot a} \ket{\psi_a}$. Applying the Quantum Phase Estimation (QPE) algorithm, we can compute $c' \cdot a$, and after repeating for multiple random $a$ we can obtain a good approximation of $c'$.

Unfortunately, we are trying to solve for $\mathbf{B} c'$, so we cannot proceed so directly. However, we do have the target vector $t = \mathbf{B} c' + \Delta$ for some $\Delta$ of norm regulated by the BDD approximation factor, and we have $U_t \ket{\psi_a} = \omega_q^{c' \cdot a} U_{\Delta} \ket{\psi_a} \approx \omega_q^{c' \cdot a}
\ket{\psi_a}$. (``Approximately'' here is defined in terms of state overlap. If we had not summed over cubes $C$ about each lattice point, then for any $\Delta$ not a lattice vector this overlap would be zero. This justifies our consideration of these states). This is the key quantum subroutine used in the dimensionality reduction: Eldar and Hallgren use a modified approximate QPE to sample $m = O(\sqrt{k \log q})$ noisy inner products of $c'$ and construct a lattice $\tilde{\mathcal{L}}$ of the form described above. Then they simply use Babai's nearest plane algorithm on $\tilde{\mathcal{L}}$ and map the solution back to the original lattice $\mathcal{L}$.

\section{Classical Primitives}\label{sec:classical}

\subsection{The LLL Algorithm}\label{sec:lll}

These lattice problems had been studied long before they became relevant to cryptography after Ajtai's worst-case/average-case equivalence result for lattices \cite{ajtai}. In fact, Gauss already provided a polynomial-time algorithm to solve SVP \textit{exactly} in 2-dimensions in the 19th century. Broadly, the algorithm finds the most orthogonal basis we can find for the given lattice using orthogonal projections, in an iterative manner that resembles Euclid's algorithm~\cite{ch17}. 

While solving SVP exactly (i.e., for $\gamma(n) = 1$) does not carry over to higher dimensions, a similar idea prevails. Namely, we would like to obtain a short basis, which is achieved through orthogonalizing such basis as much as possible. Suppose a basis $\{v_1, \ldots, v_n\}$ is orthogonal. The shortest vector in the lattice can be written as some linear combination $\lambda_1 = a_1 v_1 + \cdots + a_n v_n$ for $a_i \in \Z$, which has norm-squared $\| \lambda_1 \|^2 = a_1^2 \|v_1\|^2 + \cdots + a_n^2 \|v_n\|^2$. This is minimized for $a_i = \delta_{j0}$, such that $v_j$ is the shortest basis vector. We conclude that any orthogonal basis must include the shortest vector in the lattice. While not every lattice has an orthogonal basis, this reasoning explains why the LLL algorithm uses the Gram-Schmidt orthogonalization process as a subroutine. Throughout this paper, we will use the superscript $*$ to denote a Gram-Schmidt vector.

The LLL algorithm was introduced by Arjen Lenstra, Hendrik Lenstra, and László Lovász in 1982 \cite{lll}, originally in the context of polynomial factorization. The LLL algorithm reduces an arbitrary basis for a lattice into a ``shorter'' basis, which is referred to as the \textit{LLL-reduced basis} and is formally defined as follows:

\begin{definition}[$\delta$-LLL reduced basis] \label{LLL reduced}
    Let $\{b_1, b_2, \ldots, b_n\}$ be a basis for an $n$-dimensional lattice $\mathcal{L}$, and let $\{b_1^*, b_2^*, \ldots, b_n^*\}$ be the orthogonal basis generated with Gram-Schmidt. Let $\mu_{i, k} = \frac{\langle b_k, b_i \rangle}{\langle b_i^* \cdot b_i^* \rangle}$, where $\langle \cdot, \cdot \rangle$ denotes the dot product between two vectors. We say that $\{b_1, b_2, \ldots, b_n\}$ is a $\delta$-\textit{LLL-reduced basis} if it satisfies the following two conditions:
    \begin{enumerate}
        \item (Size reduced.) For all $i \neq k, \mu_{i, k} \leq 1/2$.
        \item (Lovász condition.) For each $i$, $||b_{i+1}^* + \mu_{i, i}b_i^*||^2 \geq \delta ||b_i^*||^2$. Equivalently, we can write $||b^{*}_{i+1}||^2 \geq (\delta - \mu^2_{i+1, i}) ||b_i^*||^2$.
    \end{enumerate}
\end{definition}

We remark that, in \cite{dvw21v2}, their usages of $\delta$ denote what we have defined as $1/\sqrt{\delta}$. That is, if $\delta'$ denotes the parameter used by \cite{dvw21v2}, then $\delta' = 1/\sqrt{\delta}$, where $\delta$ is defined as above. It is customary in the literature to choose $\delta = 3/4$, although any value $\delta \in (1/4, 1)$ guarantees that the LLL algorithm terminates in polynomial time. 

Put into words, the second condition reads as saying that $b^*_i$ is not much larger than $b^*_{i-1}$. Let us further develop the Lovász condition:
\[
    ||b^*_{i+1}||^2 \geq (\delta - \mu^2_{i+1, i}) ||b^*_i||^2  \geq \delta ||b_i^*||^2,
\]
\[
    ||b^*_i|| \geq \sqrt{\delta} ||b^*_i||.
\]
If we take logarithms in the Lovász condition, we obtain that
\begin{equation}
    \ln||b^*_{i+1}|| \geq  \ln(\sqrt{\delta} \cdot ||b^*_i||) = \ln ||b^*_i|| - \ln(1/\sqrt{\delta}).
\end{equation}
In notation of the Ducas and van Woerden note, where $\ell_i := \ln ||b^*_i||$, this corresponds to
\begin{equation}\label{eq:1}
    \ell_{i+1} \geq \ell_i - \ln(1/\sqrt{\delta}).
\end{equation}

\noindent If we re-state Equation~\ref{eq:1} in terms of the $\delta'$ used in the Ducas and van Woerden note, we obtain that
\begin{equation}\label{eq:2.3}
    \ell_{i+1} \geq \ell_i - \ln(\delta').
\end{equation}
This corresponds exactly to Fact 2.2 in the Ducas and van Woerden note.

It is natural to ask whether every lattice has an LLL-reduced basis. The LLL algorithm gives a constructive proof that the answer is yes. The LLL pseudocode is presented in Algorithm~\ref{alg:main1}.

\begin{algorithm}\caption{The LLL algorithm}\label{alg:main1}
\begin{algorithmic}[1]
\Procedure{\textsc{LLLAlg}}{$\{b_1, b_2, \ldots, b_n\}$}

    \State Compute the Gram-Schmidt basis $b^*_1, \ldots, b^*_n$ and coefficients $\mu_{i, j}$ for $1 \leq j < i \leq n$.
    \State Repeat the following two steps until the LLL reduced basis is found:
    \For{$i=1$ \textbf{to} $n$} \Comment{Perform size reduction}
        \For{$k = i-1$ \textbf{to} 1}
                \State $m \leftarrow \lfloor \mu_{k, i} \rfloor$
                \State $b_i \leftarrow b_i - mb_k$
        \EndFor
    \EndFor
    
    \For{$i=1$ \textbf{to} $n-1$} \Comment{Check Lovász condition}
        \If{$||b_{i+1}^* + \mu_{i, i+1}b^*_i||^2 < \delta||b_i^*||^2$}
        
            \State Swap $b_{i+1}$ and $b_i$.
            \State Go to Step 1.
        
        \EndIf
    \EndFor
\EndProcedure
\end{algorithmic}
\end{algorithm}

The correctness of the algorithm can be stated as follows \cite{ch17}:

\begin{lemma}
    If the LLL algorithm terminates, then the output basis is LLL-reduced. Moreover, for any $\delta \in (1/4, 1)$, the LLL algorithm terminates within polynomial time.
\end{lemma}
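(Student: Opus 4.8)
The claim has two halves, and I would treat them in turn. First note that every operation the algorithm performs --- subtracting an integer multiple of one basis vector from another, or swapping two basis vectors --- preserves the lattice $\mathcal{L}$, so the output is automatically a basis of $\mathcal{L}$; what must be checked is the two conditions of Definition~\ref{LLL reduced} and the runtime. The key structural fact for correctness is that the size-reduction double loop never changes the Gram--Schmidt vectors $b_i^*$: replacing $b_i$ by $b_i - m b_k$ with $k<i$ and $m\in\Z$ leaves $\mathrm{span}(b_1,\dots,b_j)$ fixed for every $j$, hence leaves every $b_j^*$ and every $\|b_j^*\|$ fixed, so it cannot affect whether the Lov\'asz condition holds. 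I would then check that, after the inner loop runs over $k=i-1,i-2,\dots,1$ subtracting from $b_i$ the nearest integer multiple of each $b_k$, every coefficient satisfies $|\mu_{k,i}|\le\tfrac12$; processing $k$ in \emph{decreasing} order is what makes this work, since the update at index $k$ alters $\mu_{j,i}$ only for $j\le k$, so coefficients already made small are not disturbed. Thus after a full size-reduction pass, condition~1 holds, and since the outer loop exits only when the Lov\'asz scan completes with no swap, condition~2 then holds for all $i$ as well, while condition~1 survives from the size reduction immediately preceding that scan. Hence at termination the basis is $\delta$-LLL-reduced.

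For polynomial termination I would use the classical potential argument. For $1\le i\le n$ set $d_i := \det(B_i^\top B_i)=\prod_{j=1}^i\|b_j^*\|^2$ where $B_i=[\,b_1\mid\dots\mid b_i\,]$; since $\mathcal{L}$ is integral, each $d_i$ is a positive integer, so $d_i\ge 1$, and I define the potential $D:=\prod_{i=1}^n d_i\ge 1$. Size reduction leaves all $b_j^*$ fixed, hence leaves $D$ unchanged. A swap of $b_i$ and $b_{i+1}$ leaves the set $\{b_1,\dots,b_j\}$ unchanged for every $j\ne i$, hence leaves $d_j$ unchanged for $j\ne i$; and such a swap is performed only when $\|b_{i+1}^*+\mu_{i,i+1}b_i^*\|^2<\delta\|b_i^*\|^2$, where the left-hand side is exactly the squared norm of the new $i$-th Gram--Schmidt vector. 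Therefore each swap replaces $d_i$ by something less than $\delta$ times its old value, so each swap multiplies $D$ by a factor strictly less than $\delta<1$.

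To conclude, I would bound the starting value: since $\|b_j^*\|\le\|b_j\|\le B$ with $B:=\max_j\|b_j\|$, we get $d_i\le B^{2i}$ and hence $D_{\mathrm{init}}\le B^{n(n+1)}\le B^{2n^2}$. Combining $D\ge 1$ with the per-swap shrinkage by a factor $<\delta$, the number $T$ of swaps satisfies $\delta^{-T}\le D_{\mathrm{init}}\le B^{2n^2}$, so $T=O(n^2\log B)$, which is polynomial in the input size. Between two consecutive swaps the algorithm performs one size-reduction pass, one (partial) Lov\'asz scan, and one recomputation of the Gram--Schmidt data, all using $\mathrm{poly}(n)$ arithmetic operations; multiplying by $T$ bounds the number of operations polynomially. (Only $\delta<1$ is used here; the hypothesis $\delta>\tfrac14$ is what makes the reduced basis useful, via $\delta-\mu_{i+1,i}^2\ge\delta-\tfrac14>0$, but it is irrelevant to termination.)

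The main obstacle is not counting iterations but the bit-complexity: the vectors $b_i^*$ and the coefficients $\mu_{k,i}$ are rational, and one must show their numerators and denominators never blow up, so that each arithmetic operation is itself polynomial-time --- this is the point the original LLL analysis treats most carefully. The standard resolution is to observe that the integers $d_j$ are bounded by $B^{2n^2}$ and that throughout the execution the denominators occurring in the $b_i^*$ and the $\mu_{k,i}$ divide products of the $d_j$, so every quantity manipulated has $\mathrm{poly}(n,\log B)$ bits. I would either cite this bound or reproduce the short denominator-tracking computation; with it in hand, the running time is genuinely polynomial, completing the proof.
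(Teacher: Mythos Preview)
Your proposal is correct and follows essentially the same route as the paper: the potential $D=\prod_i d_i$ with $d_i=\prod_{j\le i}\|b_j^*\|^2$ is exactly the square of the paper's $\phi(\mathbf{B})=\prod_i\det(\mathcal{L}_i)$, and your swap analysis, integrality lower bound $D\ge 1$, and initial upper bound match the paper's Lemmas following the statement. In fact you go further than the paper, which only treats the termination half and omits the correctness argument and the bit-size discussion; your handling of both is standard and sound.
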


We will focus on the proof that the LLL algorithm terminates, because such a proof uses facts that we will need when showing correctness of the classical algorithm in Ducas and van Woerden. To show that the algorithm performs a finite number of steps, we define the following \textit{potential function} on the basis vectors of the lattice \cite{regev}:
\begin{equation}
    \phi(\textbf{B}) = \prod_{i=1}^n \det(\mathcal{L}_i) = \prod^n_{i=1} ||b^*_1|| \cdot \ldots \cdot ||b^*_i|| = \prod_{i=1}^n ||b^*_i||^{n-i+1},
\end{equation}
where $\mathcal{L}_i$ is defined as the lattice spanned by $b_1, \ldots, b_i$. Then, to show that the LLL algorithm terminates, we need to prove that the initial value of $\phi(\textbf{B})$ is upper bounded, that it can never drop below 1, and that each iteration of the LLL algorithm decreases $\phi$ by a factor of at least $\sqrt{4/3}$ \cite{peikert}.

The potential of the original basis $\textbf{B}$ is bounded by $\prod^n_{i=1} ||b_i||^n$, given that $||b^*_i|| \leq ||b_i||$. Because every intermediate basis is integral by definition and has positive integer determinant, the same holds for all the lattices $\mathcal{L}_i$ associated with that basis. Hence it follows that the potential is at least 1.

Let us consider the step \textit{Check Lov\'asz condition} in the pseudocode of the LLL algorithm (namely, Line 10). Suppose that $b_i$ and $b_{i+1}$ are swapped in Line 12 of the LLL algorithm pseudocode, and let the resulting basis be denoted $\textbf{B}'$. Then, we claim the following \cite{peikert}:

\begin{lemma}
If $b_i$ and $b_{i+1}$ are swapped in Line 12 of the LLL algorithm, then $b'^{*}_j = b^*_j$ for all $j \notin \{i, i+1\}$, and $b'^{*}_i = \mu_{i, i+1} b^*_i + b'_{i+1}$.
\end{lemma}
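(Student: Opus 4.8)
The plan is to prove this directly from the \emph{projection characterization} of Gram--Schmidt: for any basis $c_1,\dots,c_n$ with Gram--Schmidt vectors $c_1^*,\dots,c_n^*$, the vector $c_j^*$ is exactly the component of $c_j$ orthogonal to $\textrm{span}(c_1,\dots,c_{j-1})$, and moreover $\textrm{span}(c_1,\dots,c_j)=\textrm{span}(c_1^*,\dots,c_j^*)$ for every $j$. Writing the swapped basis as $b'_j=b_j$ for $j\notin\{i,i+1\}$, $b'_i=b_{i+1}$, and $b'_{i+1}=b_i$, the whole argument then reduces to tracking how the prefix $\{b'_1,\dots,b'_{j-1}\}$ compares, as a set, to $\{b_1,\dots,b_{j-1}\}$.

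First I would dispose of the indices $j\notin\{i,i+1\}$. For $j\le i-1$ the prefixes are literally identical, since $b'_l=b_l$ for all $l\le i-1$; hence the Gram--Schmidt process produces the same output and $b'^*_j=b^*_j$. For $j\ge i+2$, the prefix $\{b'_1,\dots,b'_{j-1}\}$ equals $\{b_1,\dots,b_{j-1}\}$ \emph{as a set} --- it is the same collection of vectors, merely with $b_i$ and $b_{i+1}$ transposed --- so the two prefixes span the same subspace; since also $b'_j=b_j$, the orthogonal component of $b'_j$ relative to that subspace is unchanged, giving $b'^*_j=b^*_j$. The one index where this set-equality fails is $j=i$: there the prefix is $\{b_1,\dots,b_{i-1},b_{i+1}\}$, which in general spans a different subspace than $\{b_1,\dots,b_i\}$; this is precisely why $b'^*_i$ changes (and also why one must \emph{not} expect $b'^*_{i+1}=b^*_i$).

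For the formula at index $i$, note that $\textrm{span}(b'_1,\dots,b'_{i-1})=\textrm{span}(b_1,\dots,b_{i-1})=\textrm{span}(b_1^*,\dots,b_{i-1}^*)$, so $b'^*_i$ is the orthogonal projection of $b'_i=b_{i+1}$ onto the complement of this span. Expanding in the original Gram--Schmidt basis, $b_{i+1}=b_{i+1}^*+\mu_{i,i+1}b_i^*+\sum_{l\le i-1}\mu_{l,i+1}b_l^*$, I would observe that the last sum lies in $\textrm{span}(b_1^*,\dots,b_{i-1}^*)$ while $b_{i+1}^*$ and $b_i^*$ are each orthogonal to every $b_l^*$ with $l\le i-1$; hence the surviving orthogonal component is exactly $b_{i+1}^*+\mu_{i,i+1}b_i^*$, i.e. $b'^*_i=\mu_{i,i+1}b_i^*+b_{i+1}^*$. (The printed statement writes $b'_{i+1}$ in place of $b^*_{i+1}$, which I take to be a typo, since $b'_{i+1}=b_i$ is not a Gram--Schmidt vector at all.)

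I do not anticipate a genuine obstacle: the content is entirely the projection characterization of Gram--Schmidt together with bookkeeping on index ranges. The two places to be careful are (i) the claim that transposing two entries of a prefix leaves its span unchanged --- elementary, but it is exactly what makes the $j\ge i+2$ case go through --- and (ii) resisting the temptation to also assert $b'^*_{i+1}=b^*_i$, which is false in general: $\{b^*_i,b^*_{i+1}\}$ and $\{b'^*_i,b'^*_{i+1}\}$ span the same plane and (by the determinant invariance exploited in the potential argument) have the same product of norms, but are otherwise different orthogonal frames for that plane.
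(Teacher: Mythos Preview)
Your proposal is correct and follows essentially the same approach as the paper: both use the projection characterization of Gram--Schmidt and split into the cases $j<i$, $j>i+1$, and $j=i$. Your version is in fact more careful than the paper's --- you make explicit why the $j\ge i+2$ case works (the prefix is the same \emph{set} of vectors, hence the same span), whereas the paper simply asserts ``the same argument holds for $j>i+1$''; and you correctly flag that the printed $b'_{i+1}$ is a typo for $b^*_{i+1}$, which the paper's own proof confirms.
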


\begin{proof}
    If $j < i$, then the vector $b'^*_j$ is not affected by the swap, because it is the component of $b'_j = b_j$ orthogonal to $\text{span}(b'_1, \ldots, b'_{j-1}) = \text{span}(b_1, \ldots, b_{j-1})$. The same argument holds for $j > i+1$. Lastly, $b'^*_i$ is by definition the component of $b'_i = b_{i+1}$ orthogonal to $\text{span}(b'_1, \dots, b'_{i+1}) = \text{span}(b_1, \ldots, b_{i-1})$, which is $\mu_{i, i+1} b^*_i + b^*_{i+1}$ by construction.
\end{proof}

We can then prove the following \cite{peikert}:
\begin{lemma}
        If $b_i$ and $b_{i+1}$ are swapped in Line 12 of the LLL algorithm, then 
        \[
            \dfrac{\phi(\textbf{B}')}{\phi(\textbf{B})} < \sqrt{\delta}.
        \]
\end{lemma}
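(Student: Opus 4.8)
The plan is to reduce the ratio of potentials to a single Gram--Schmidt norm ratio, and then invoke the condition that triggered the swap. First I would observe that swapping $b_i$ and $b_{i+1}$ leaves the intermediate lattices $\mathcal{L}_j$ unchanged for every $j \neq i$: for $j < i$ the vectors $b_1,\dots,b_j$ are untouched, so $\mathcal{L}'_j = \mathcal{L}_j$; and for $j \geq i+1$ the set $\{b'_1,\dots,b'_j\}$ is merely a permutation of $\{b_1,\dots,b_j\}$, hence spans the same lattice, so again $\mathcal{L}'_j = \mathcal{L}_j$. In particular $\det(\mathcal{L}'_j) = \det(\mathcal{L}_j)$ for all $j \neq i$, and the only factor of $\phi$ that can change is $\det(\mathcal{L}_i)$.

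Next I would use the product formula $\det(\mathcal{L}_i) = \prod_{j=1}^i \|b^*_j\|$ together with the fact (from the previous lemma) that $b'^*_j = b^*_j$ for $j < i$. Dividing, almost everything cancels and one is left with
\[
    \frac{\phi(\mathbf{B}')}{\phi(\mathbf{B})} \;=\; \frac{\det(\mathcal{L}'_i)}{\det(\mathcal{L}_i)} \;=\; \frac{\|b'^{*}_i\|}{\|b^{*}_i\|}.
\]
Then, applying the previous lemma again, $b'^{*}_i = \mu_{i,i+1} b^*_i + b^*_{i+1}$, and since $b^*_i \perp b^*_{i+1}$ the Pythagorean identity gives $\|b'^{*}_i\|^2 = \mu_{i,i+1}^2\|b^*_i\|^2 + \|b^*_{i+1}\|^2 = \|b^*_{i+1} + \mu_{i,i+1} b^*_i\|^2$.

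Finally I would invoke the reason the swap occurred in Line 12: the Lov\'asz condition failed on the pair $(b_i, b_{i+1})$, i.e. $\|b^*_{i+1} + \mu_{i,i+1} b^*_i\|^2 < \delta\, \|b^*_i\|^2$. Combining with the previous display yields $\|b'^{*}_i\|^2 < \delta \|b^*_i\|^2$, hence $\phi(\mathbf{B}')/\phi(\mathbf{B}) = \|b'^{*}_i\|/\|b^*_i\| < \sqrt{\delta}$, as claimed. I do not expect a genuine obstacle here; the one point needing care is the bookkeeping in the first paragraph — justifying that $\det(\mathcal{L}_j)$ is unaffected for $j \neq i$, especially the $j > i$ case where the spanning set is only permuted — since that is exactly what makes the telescoping collapse to a single ratio.
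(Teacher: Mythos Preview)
Your proposal is correct and follows essentially the same route as the paper: reduce $\phi(\mathbf{B}')/\phi(\mathbf{B})$ to the single ratio $\det(\mathcal{L}'_i)/\det(\mathcal{L}_i)$, apply the previous lemma to identify $b'^{*}_i$, and conclude via the failed Lov\'asz condition that triggered the swap. You give more explicit bookkeeping on why the $j\neq i$ factors cancel, but the argument is otherwise identical.
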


\begin{proof}
This immediately follows from the previous lemma. Define $\mathcal{L}_i$ as before, and let $\mathcal{L}'_i$ denote the lattice generated by the first $i$ basis vectors after a swap. Then:
\[
    \dfrac{\phi(\textbf{B}')}{\phi(\textbf{B})} = \dfrac{\det(\mathcal{L'}_i)}{\det(\mathcal{L}_i)} = \frac{||b^*_1|| \cdot \ldots \cdot ||b^*_{i-1}|| \cdot ||\mu_{i, i+1} b^*_i + b^*_{i+1}||}{||b^*_1|| \cdot \ldots \cdot ||b^*_{i-1}|| \cdot ||b^*_i||} = \frac{||\mu_{i, i+1}b^*_i + b^*_{i+1}||}{||b^*_i||} < \sqrt{\delta},
\]
given that, as per Line 11 in the LLL algorithm (Alg.~\ref{alg:main1}), the vectors are swapped if and only if $||b^*_{i+1} + \mu_{i, i+1} b^*_i||^2 < \delta ||b_i^*||^2$.
\end{proof}

Since $\delta \in (1/4, 1)$, the previous lemma in particular implies that
\[
    \prod_{j=i}^n \det(\mathcal{L'}_j) \leq \prod_{j=i}^n \det(\mathcal{L}_j).
\]
By taking logarithms, and by switching to the notation of Ducas and van Woerden, it follows that
\begin{equation}\label{partialvols}
    \sum^i_{j=1} \ell_j \leq \sum^i_{j=1} \ell'_j,
\end{equation}
which corresponds to their Fact 2.3. Note that in Ducas and van Woerden's notation, $\ell'_j$ is the logarithm of the norm of a vector from the original basis, and $\ell_j$ is the logarithm of the norm of a vector from the LLL-reduced basis.

Lastly, we ask, how does the LLL-reduced basis relate to the SVP? The following theorem shows their relationship; namely, the first vector in the ordered LLL-reduced basis corresponds to the desired approximation of $\lambda_1(\mathcal{L})$:
\begin{proposition}[$\gamma$-SVP from LLL]
    If $\{b_1, b_2, \ldots, b_n\}$ is an $n$-dimensional LLL-reduced basis of a lattice $\mathcal{L}$, then $||b_1|| \leq 2^{(n-1)/2} \lambda_1(\mathcal{L})$, where $\lambda_1(\mathcal{L})$ is the length of the shortest vector of $\mathcal{L}$.
\end{proposition}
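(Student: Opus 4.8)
The plan is to bound $\|b_1\|$ by relating it to the Gram–Schmidt norms $\|b_i^*\|$, and then relating those to $\lambda_1(\mathcal{L})$. First I would observe that the Lovász condition, combined with the size-reduction bound $\mu_{i,i+1}^2 \le 1/4$, gives $\|b_{i+1}^*\|^2 \ge (\delta - 1/4)\|b_i^*\|^2$; taking $\delta = 3/4$ this reads $\|b_{i+1}^*\|^2 \ge \tfrac12 \|b_i^*\|^2$. Iterating this inequality downward from index $i$ to index $1$ yields $\|b_i^*\|^2 \ge 2^{-(i-1)}\|b_1^*\|^2$, and since $b_1^* = b_1$ we get $\|b_1\|^2 \le 2^{i-1}\|b_i^*\|^2$ for every $i$, hence $\|b_1\|^2 \le 2^{n-1}\min_i \|b_i^*\|^2$.

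The second ingredient is the standard fact that $\lambda_1(\mathcal{L}) \ge \min_{1\le i \le n}\|b_i^*\|$. To see this, take any nonzero $v \in \mathcal{L}$ and write $v = \sum_{j=1}^n z_j b_j$ with $z_j \in \Z$; let $i$ be the largest index with $z_i \ne 0$. Expanding each $b_j$ in the Gram–Schmidt basis, the coefficient of $b_i^*$ in $v$ is exactly $z_i$ (since $b_j$ for $j < i$ has no $b_i^*$ component and $b_i$ contributes $b_i^*$ with coefficient $1$). Therefore $\|v\| \ge |z_i|\,\|b_i^*\| \ge \|b_i^*\| \ge \min_i \|b_i^*\|$. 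Applying this to a shortest vector gives the claim.

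Combining the two bounds, $\|b_1\|^2 \le 2^{n-1}\min_i\|b_i^*\|^2 \le 2^{n-1}\lambda_1(\mathcal{L})^2$, so $\|b_1\| \le 2^{(n-1)/2}\lambda_1(\mathcal{L})$, as desired. I do not expect a serious obstacle here; the one point requiring a little care is the argument that $\lambda_1 \ge \min_i \|b_i^*\|$ — specifically the bookkeeping that the $b_i^*$-coefficient of a lattice vector $v = \sum z_j b_j$ equals the integer $z_i$ where $i$ is the top nonzero index, which relies on the triangular change of basis between $\{b_j\}$ and $\{b_j^*\}$. Everything else is a direct iteration of the Lovász inequality.
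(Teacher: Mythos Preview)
Your argument is correct and is exactly the standard proof. The paper itself does not give a proof of this proposition at all; it simply states that ``the proof is short and can be consulted, for example, in Deng's manuscript \cite{deng}.'' What you have written is precisely the argument one finds in such references: iterate the Lov\'asz condition (with $\delta=3/4$, so $\delta-\tfrac14=\tfrac12$) to bound $\|b_1\|$ by $2^{(i-1)/2}\|b_i^*\|$, and combine with the lower bound $\lambda_1(\mathcal{L})\ge \min_i\|b_i^*\|$ coming from the triangular Gram--Schmidt decomposition of any nonzero lattice vector. One cosmetic remark: the coefficient appearing in the Lov\'asz condition is what the paper writes as $\mu_{i+1,i}$ rather than $\mu_{i,i+1}$, but the paper's own indexing is not fully consistent and your intended meaning is clear.
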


The proof is short and can be consulted, for example, in Deng's manuscript \cite{deng}.

\subsection{Babai's Nearest Plane Algorithm}\label{sec:babai}

After explaining how the LLL algorithm provides an approximation to SVP, let us mention briefly how to obtain an approximation to CVP.

Babai's nearest plane algorithm solves $\gamma$-CVP for $\gamma(n) = 2^{n/2}$ in polynomial time. Because we do not need the inner mechanisms of Babai's algorithm in order to analyze the classical counterpart to Eldar and Hallgren's algorithm, other than its final approximation factor for CVP, we will not describe the algorithm in detail and instead only provide a short intuitive overview.

Babai's algorithm considers the affine subspace $H_c = \{ \sum_{i=1}^{n-1} a_i b_i + c b_n \}$. Then, $H_c = H_0 + cb_n$. However, $b_n$ is not necessarily orthogonal to $H_0$, and hence the distance between $H_0$ and $H_c$ is exactly $c ||b^*_n||$ (as opposed to $c||b_n||$). Babai then observes that inside each of these hyperplanes, there is a copy of the sublattice $\mathcal{L}'$ generated by the first $n-1$ basis vectors $b_i$ and shifted by $cb_n$. Then, Babai's algorithm iteratively finds the closest hyperplane $H_c$ to the target vector $t$ (as defined in $\gamma$-CVP) fixing one coordinate at a time \cite{babai}. 

When the input lattice to Babai's algorithm is an LLL-reduced basis (as described in Section~\ref{sec:lll}), the desired approximation for CVP follows. A proof of this claim can be found, for example, in \cite{noahcvp} or \cite{regevcvp}.

\section{Proof of Classical Alternative}\label{sec:proof}

Eldar and Hallgren's proof is interesting and uses ``phased cube states'' in a unique way, but their construction can be simplified. For the class of lattices on which they solve BDD, there in fact exists a deterministic, far simpler classical algorithm achieving the same result. Within three days of Hallgren's talk \cite{h21}, Ducas and van Woerden published a note \cite{dvw21} solving BDD with the appropriate subexponential factor for the special case $q = 2^n, k = 1$. 

Emboldened by this result, we sought to generalize their proof to the case of arbitrary $q, k$, demonstrating a classical alternative to Eldar and Hallgren's algorithm in all cases. However, on October 14, 2021, Ducas and van Woerden managed to prove the generalization we were working on $\cite{dvw21v2}$. Still, their proof is rather brief and leaves out many key details and intuitions. Here, we present a complete and self-contained proof of Theorem \ref{OHAyanishey}.

We remark that we already proved Facts 2.2 and 2.3 in \cite{dvw21v2} when describing the LLL algorithm and its correctness proof in Section~\ref{sec:lll}. We will use them again when finalizing the proof of the result in Section~\ref{sec:lastfacts}.

\subsection{Short Bases from Short Full-Rank Sets}\label{sec:51}
We open the proof with a lemma which shows that if we can find a full-rank set of short lattice vectors, then we can construct a lattice basis of short vectors, as well. Formally:
\begin{lemma}[Lemma 7.1 in \cite{MicciancioGoldwasser}, Fact 2.1 in \cite{dvw21v2}]
There exists a deterministic poly-time algorithm which, given a basis $\mathbf{B'}$ of an $n$-dimensional lattice $\mathcal{L}(\mathbf{B'})$, and a full-rank set of $n$ vectors $S \subset \mathcal{L}(\mathbf{B'})$ outputs a basis $\mathbf{B}$ such that $\|b_i^*\| \le \max_{s \in S} \|s\|$.
\end{lemma}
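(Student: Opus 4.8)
The plan is to build the basis $\mathbf{B}$ one vector at a time, working through the Gram--Schmidt filtration, and at each stage to pull a vector out of $S$ (or out of the span of $S$) whose $i$-th Gram--Schmidt component is nonzero and has norm bounded by $\max_{s\in S}\|s\|$. Concretely, order the vectors of $S$ arbitrarily as $s_1,\dots,s_n$. I would process the coordinates $i = 1,\dots,n$ in order; having already fixed $b_1,\dots,b_{i-1}$, consider the orthogonal projection $\pi_i$ away from $\mathrm{span}(b_1,\dots,b_{i-1})$. Since $S$ is full-rank, not all of $\pi_i(s_1),\dots,\pi_i(s_n)$ vanish, so some $\pi_i(s_j)\neq 0$; among all such $j$ pick one — I would want to pick it so that $\|\pi_i(s_j)\| \le \max_{s\in S}\|s\|$, and the point is that $\|\pi_i(s_j)\| \le \|s_j\| \le \max_{s}\|s\|$ automatically, because orthogonal projection never increases norm. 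So \emph{any} valid choice works for the norm bound; the only real content is ensuring the chosen vectors actually form a lattice basis, not merely a full-rank subset of $\mathcal{L}$.

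That last point is the crux, and it is where the hypothesis that we are given a basis $\mathbf{B}'$ of $\mathcal{L}$ gets used. The naive "take the $s_j$ with nonzero projection" construction produces a full-rank set with the right Gram--Schmidt norms but generally spans a finite-index sublattice, not all of $\mathcal{L}$. The fix is the standard one (this is exactly the Micciancio--Goldwasser argument, Lemma 7.1 in \cite{MicciancioGoldwasser}): process the vectors of $S$ together with the given basis $\mathbf{B}'$ via an integer (Hermite-normal-form-style) reduction. Write each $s_j$ in the coordinates of $\mathbf{B}'$, forming an integer matrix; by unimodular column operations one can triangularize so that the resulting vectors $b_1,\dots,b_n$ generate the same lattice $\mathcal{L}(\mathbf{B}')$ while each $b_i$ is, up to adding integer multiples of the earlier $b_\ell$, equal to the $s_j$ chosen at step $i$. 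Adding multiples of $b_1,\dots,b_{i-1}$ does not change $\pi_i(b_i) = \pi_i(s_j)$, so the Gram--Schmidt norm $\|b_i^*\| = \|\pi_i(b_i)\| = \|\pi_i(s_j)\| \le \|s_j\| \le \max_{s\in S}\|s\|$ is preserved. All of this is polynomial-time: HNF / integer Gaussian elimination over $\mathbb{Z}$ runs in polynomial time with polynomially bounded intermediate entries.

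I would organize the write-up in two moves. \textbf{Step 1 (the easy inequality).} For any vector $v$ and any subspace $W$, the orthogonal projection away from $W$ satisfies $\|\pi_{W^\perp}(v)\| \le \|v\|$; hence once we arrange that each $b_i$ agrees with some $s_j$ modulo $\mathrm{span}(b_1,\dots,b_{i-1})$, we immediately get $\|b_i^*\| \le \|s_j\| \le \max_{s\in S}\|s\|$. \textbf{Step 2 (getting an honest basis).} Show that one can choose the $b_i$ to be an actual basis of $\mathcal{L}(\mathbf{B}')$ with that modular-agreement property, via triangularization of the integer coordinate matrix of $S \cup \mathbf{B}'$; invoke polynomial-time HNF for the complexity claim. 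The main obstacle, and the only place subtlety hides, is Step 2 — specifically verifying that the triangularization can be done so that $b_i$'s Gram--Schmidt component is inherited from a vector of $S$ rather than getting diluted by the $\mathbf{B}'$-vectors we threw in; the resolution is that $\mathbf{B}'$-vectors are only ever needed to "fill in" coordinates already spanned, i.e. they enter with the later pivots, so the pivot at level $i$ can always be taken to come from $S$ as long as some $s_j$ still has a nonzero $i$-th projected coordinate, which full-rankness of $S$ guarantees. I'd also remark that the bound is on $\|b_i^*\|$, not $\|b_i\|$ — the latter can be larger — which is exactly what the downstream application (feeding into Babai / the volume bounds of Section~\ref{sec:lll}) needs.
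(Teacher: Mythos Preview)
Your Step~1 is fine, and the overall shape---triangularize the integer coordinate matrix of $S$ relative to $\mathbf{B}'$, then read off a basis---is exactly what the paper does. But Step~2 contains a genuine gap: you claim that after triangularization ``each $b_i$ is, up to adding integer multiples of the earlier $b_\ell$, equal to the $s_j$ chosen at step $i$,'' i.e.\ that $\pi_i(b_i)=\pi_i(s_j)$. This is too strong and is not achievable in general while keeping $\{b_i\}$ a basis of $\mathcal{L}$. Take $\mathcal{L}=\mathbb{Z}^2$ with $\mathbf{B}'=I$ and $S=\{(2,0),(0,2)\}$. Neither vector of $S$ is primitive, so already at $i=1$ there is no $b_1\in\{s_1,s_2\}$ extendable to a basis of $\mathbb{Z}^2$; any basis $\{b_1,b_2\}$ with $\pi_i(b_i)=\pi_i(s_{\sigma(i)})$ would have $|\det|=4$. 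Your justification that ``$\mathbf{B}'$-vectors are only ever needed to fill in coordinates already spanned'' is exactly what fails here: the $\mathbf{B}'$-vectors are needed to \emph{shrink} the pivot, not just to fill in later rows.

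The correct relation, and what the paper actually proves, goes the other way: one performs unimodular \emph{row} operations on the coordinate matrix $Q$ (so $S=\mathbf{B}'Q$) to get $T=UQ$ upper-triangular, sets $\mathbf{B}=\mathbf{B}'U^{-1}$, and then $S=\mathbf{B}T$ with $T$ integer upper-triangular. The key computation (Lemma~\ref{triangular GS} in the paper) is that this forces $s_i^*=T_{ii}\,b_i^*$, hence $\|b_i^*\|=\|s_i^*\|/|T_{ii}|\le\|s_i^*\|\le\|s_i\|$ because $T_{ii}\in\mathbb{Z}\setminus\{0\}$. So it is $s_i$ that equals $T_{ii}b_i$ modulo earlier $b_\ell$'s, not $b_i$ that equals $s_j$ modulo earlier $b_\ell$'s; the integer factor $T_{ii}$ is precisely what lets $\mathbf{B}$ be a genuine basis while still inheriting the norm bound from $S$. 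Your write-up would be fixed by replacing the equality $\pi_i(b_i)=\pi_i(s_j)$ with the divisibility $\pi_i(s_j)=T_{ii}\,\pi_i(b_i)$ and noting $|T_{ii}|\ge 1$; the incremental ``choose an $s_j$ at each step'' framing and the mixing-in of $\mathbf{B}'$ into the column set are then unnecessary.
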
\label{shortbasis}
We will make use of two subclaims regarding the relationships between unimodular and upper-triangular matrices. Unimodular matrices are defined below:
\begin{definition}
An integer matrix $U \in \Z^{n \times n}$ is \emph{unimodular} if $\det(U) = \pm 1$. Note that any unimodular matrix has integer inverse by Cramer's Theorem.
\end{definition}
We prove the following lemma:
\begin{lemma} \label{up-tri uni}
Given an integer matrix $Q \in \Z^{n \times n}$, there exists a unimodular matrix $U \in \Z^{n \times n}$ such that $T = UQ$ is upper-triangular.
\end{lemma}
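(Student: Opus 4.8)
The plan is to realize $U$ as a product of unimodular matrices, each implementing an elementary integer row operation, and to build the reduction one column at a time by induction on $n$, using a Euclidean-algorithm argument on the entries of the first column. First I would record the elementary building blocks: swapping two rows, negating a row, and adding an integer multiple of one row to another are each effected by left-multiplication by an integer matrix of determinant $\pm 1$. Since $\det(AB) = \det(A)\det(B)$ and $\Z^{n \times n}$ is closed under multiplication, any finite composition of such matrices is again unimodular (and automatically has integer inverse, by the remark following the definition). Hence it suffices to exhibit a finite sequence of such row operations turning $Q$ into an upper-triangular matrix.

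For the induction, the base case $n = 1$ is vacuous, with $U = (1)$. For the inductive step, consider the first column $c = (q_{11}, \dots, q_{n1})^\top$ of $Q$. If $c = 0$, the first column is already in upper-triangular form and we pass directly to the lower-right $(n-1)\times(n-1)$ block. If $c \neq 0$, run the following loop: choose an index $i$ with $q_{i1} \neq 0$ and $|q_{i1}|$ minimal, swap row $i$ into the top position, and then for each $j \neq 1$ replace row $j$ by row $j$ minus $\lfloor q_{j1}/q_{11} \rfloor$ times row $1$, so that afterward $0 \le |q_{j1}| < |q_{11}|$ for all $j \neq 1$. After this loop terminates the matrix has the block form $\left(\begin{smallmatrix} d & v^\top \\ 0 & Q' \end{smallmatrix}\right)$ with $Q' \in \Z^{(n-1)\times(n-1)}$. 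Applying the inductive hypothesis to $Q'$ produces a unimodular $U' \in \Z^{(n-1)\times(n-1)}$ with $U'Q'$ upper-triangular, and then the unimodular matrix $\left(\begin{smallmatrix} 1 & 0 \\ 0 & U' \end{smallmatrix}\right)$ completes the reduction. Composing all the unimodular factors accumulated along the way yields the desired $U$ with $T = UQ$ upper-triangular.

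The one step requiring genuine care is the termination of the inner loop, which is precisely the descent underlying the Euclidean algorithm: after each pass that does not already clear the column, the positive integer $\min\{|q_{j1}| : 1 \le j \le n,\ q_{j1} \neq 0\}$ strictly decreases, so the loop cannot run forever; one may also observe that the final diagonal entry $d$ equals the gcd of the original column entries. Everything else is routine bookkeeping — verifying that each operation is unimodular, that it acts by left multiplication only, and that the block-triangular assembly in the inductive step is legitimate — so the Euclidean descent is the only point I would spell out carefully.
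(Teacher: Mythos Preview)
Your proposal is correct and follows essentially the same approach as the paper: build $U$ as a product of elementary unimodular row-operation matrices and clear each column by a Euclidean-style descent on its entries. The paper phrases the inner induction slightly differently (ordering the column entries by magnitude and reducing the count of nonzero entries by one each pass) whereas you swap the minimum to the top and reduce via floor division, but the underlying argument and the key termination step are the same.
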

\begin{proof}
$U$ will be constructed as a product of unimodular matrices corresponding to elementary matrices for row operations on an integer matrix: swapping two rows, multiplying a row by $-1$, and adding an integer multiple of one row to another. All such elementary matrices have determinant $\pm 1$, so by the homomorphism property of the determinant, so will $U$. We proceed by induction: if we can show that given a list $\{a_1, \ldots, a_k\}$ of integers (which we think of as the non-zero entries of a column of some $U'$ obtained from $U$ via a preceding sequence of elementary row operations), we can produce a new list $\{0, a'_1, \ldots, a_{k-1}' \}$ of integers by swapping entries in the list, multiplying entries by $-1$, and adding an integer multiple of one entry to another, then the proof will follow. First, multiply the appropriate entries by $-1$ such that all entries are non-negative. Then, order the entries by increasing value. If the first entry is 0, we are done; otherwise, subtract the maximal integer multiple of the first entry from all other entries which preserves non-negativity, and reorder in increasing value. The new first entry must have decreased in value from the previous first entry. This procedure is guaranteed to terminate in a finite number of steps, such that the first entry in eventually zero, which completes the proof.
\end{proof}
In addition, we will need the following lemma:
\begin{lemma} \label{triangular GS}
Suppose $S = BT$ for $T$ upper-triangular and invertible. Then $s_i^* =T_{ii} b_i^*$ for $s_j, b_j$ the $j$th columns of $S$ and $B$ respectively.
\end{lemma}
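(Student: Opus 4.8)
The plan is to unwind the Gram--Schmidt recursion columnwise, using the fact that $T$ being upper-triangular means $s_i = \sum_{j \le i} T_{ji} b_j$ involves only the first $i$ columns of $B$. First I would record the key observation that makes everything work: since $T$ is upper-triangular, $\mathrm{span}(s_1, \dots, s_i) \subseteq \mathrm{span}(b_1, \dots, b_i)$, and since $T$ is invertible (hence every diagonal entry $T_{ii} \ne 0$), this span inclusion is actually an equality of $i$-dimensional subspaces for every $i$. Call this common subspace $V_i$, with $V_0 = \{0\}$.

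Next I would proceed by induction on $i$. Recall that by definition $b_i^*$ is the component of $b_i$ orthogonal to $V_{i-1}$, and likewise $s_i^*$ is the component of $s_i$ orthogonal to $\mathrm{span}(s_1,\dots,s_{i-1}) = V_{i-1}$ (using the span equality from the first step). Write $\pi_{i-1}$ for orthogonal projection onto the orthogonal complement of $V_{i-1}$. Then $s_i^* = \pi_{i-1}(s_i) = \pi_{i-1}\!\left(\sum_{j=1}^{i} T_{ji} b_j\right) = \sum_{j=1}^{i} T_{ji}\, \pi_{i-1}(b_j)$. For $j < i$ we have $b_j \in V_{i-1}$, so $\pi_{i-1}(b_j) = 0$, and the only surviving term is $j = i$, giving $s_i^* = T_{ii}\, \pi_{i-1}(b_i) = T_{ii}\, b_i^*$, as desired. (The base case $i=1$ is just $s_1 = T_{11} b_1$, hence $s_1^* = s_1 = T_{11} b_1 = T_{11} b_1^*$ since Gram--Schmidt leaves the first vector unchanged.)

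I do not anticipate a serious obstacle here; the lemma is essentially bookkeeping. The one point that deserves care is making sure the Gram--Schmidt orthogonalization of the $s_j$'s is taken \emph{in the induced order} $s_1, s_2, \dots$ (the statement implicitly assumes this), and that invertibility of $T$ is genuinely used --- it guarantees $T_{ii} \ne 0$ so that the span equality $\mathrm{span}(s_1,\dots,s_i) = V_i$ holds at each stage rather than a possibly-smaller span, which is what licenses identifying the orthogonal complement of $\mathrm{span}(s_1,\dots,s_{i-1})$ with that of $V_{i-1}$. Everything else is a direct application of linearity of the projection $\pi_{i-1}$.
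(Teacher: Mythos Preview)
Your proof is correct and follows essentially the same approach as the paper: both exploit upper-triangularity to write $s_i$ in $\mathrm{span}(b_1,\dots,b_i)$ and then project away the first $i-1$ components. Your phrasing via the span equality $\mathrm{span}(s_1,\dots,s_{i-1}) = V_{i-1}$ and the projection operator $\pi_{i-1}$ is a slight streamlining --- the paper instead writes out the Gram--Schmidt sums explicitly and uses the inductive hypothesis $s_j^* = T_{jj} b_j^*$ to swap the $s_j^*$'s for $b_j^*$'s in the formula, whereas your span observation makes that substitution unnecessary.
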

\begin{proof}
We proceed by induction. Suppose the claim holds for all $j \in [i-1]$. Then we have
$$    
s_i^* = s_i - \sum_{j=1}^{i-1} \frac{\langle s_j^*, s_i \rangle}{\langle s_j^*, s_j^* \rangle} s_j^* = s_i - \sum_{j=1}^{i-1} \frac{\langle b_j^*, s_i \rangle}{\langle b_j^*, b_j^* \rangle} b_j^*.
$$
Note that $s_i$ can be written as $s_i = \sum_{k=1}^i T_{ki} b_k$ since $T$ is upper-triangular, so
$$
s_i^* = \sum_{k=1}^i T_{ki} b_k - \sum_{j=1}^{i-1} \frac{\langle b_j^*, \sum_{k=1}^i T_{ki} b_k \rangle}{\langle b_j^*, b_j^* \rangle} b_j^* = \sum_{k=1}^i T_{ki} b_k - \sum_{k=1}^i  \sum_{j=1}^{i-1} \frac{\langle b_j^*, T_{ki} b_k \rangle}{\langle b_j^*, b_j^* \rangle} b_j^*.
$$
Consider just the term
$$
\sum_{k=1}^{i-1}  \sum_{j=1}^{i-1} \frac{\langle b_j^*, T_{ki} b_k \rangle}{\langle b_j^*, b_j^* \rangle} b_j^*.
$$
Since $b_k \in \text{span} \{b_1^*, \cdots, b_{i-1}^* \}$ for all $k \in [i-1]$, it follows that
$$
\sum_{j=1}^{i-1} \frac{\langle b_j^*, T_{ki} b_k \rangle}{\langle b_j^*, b_j^* \rangle} b_j^* = T_{ki} b_k.
$$
Therefore, we have
$$
s_i^* = T_{ii} b_i - T_{ii} \sum_{j=1}^{i-1} \frac{\langle b_j^*, b_i \rangle}{\langle b_j^*, b_j^* \rangle} b_j^* \equiv T_{ii} b_i^*,
$$
as desired.
\end{proof}
Armed with these intermediate results, we may now return to the proof of the main lemma:
\begin{proof}
Since $S$ is a set of lattice vectors, we can write $S = \mathbf{B}' Q$ for some integer matrix $Q \in \Z^{n \times n}$, which is invertible since $S$ is full rank. By Lemma~\ref{up-tri uni}, we may write $T = UQ$ for some unimodular $U$ such that $T$ is upper-triangular. Let $\mathbf{B} = \mathbf{B}' U^{-1}$. $\mathbf{B}$ is a basis since $U$ is unimodular, and therefore $U^{-1}$ is an integer matrix. Furthermore, $S = \mathbf{B} T$, so by Lemma~\ref{triangular GS} we have $\|b_i^*\| = \frac{1}{|T_{ii}|} \|s_i^*\| \le \|s_i^*\|$, since $T_{ii}$ are all integers. Since $\|s_i^*\| \le \|s_i\| \le \text{max}_{s \in S} \|s\|$, the proof is complete.
\end{proof}

\subsection{Duality}\label{sec:52}
In order to complete our proof, we will be using the idea of dual lattices, the definition of which is given as:
\begin{definition}
The dual of a lattice $\mathcal{L}$ is
\begin{align}
    \hat{\mathcal{L}} = \{v\in \text{span}(\mathcal{L}) | \langle v,w\rangle \in \mathbb{Z}\text{ for all }w\in\mathcal{L}\}.
\end{align}
\end{definition}
Equivalently, the dual lattice $\hat{\mathcal{L}}$ can be thought of as the set of all linear functions $\mathcal{L} \rightarrow \mathbb{Z}$. In order to use duality in our proof, we will first need to prove several lemmas, some of which are inspired by the exercises in \cite{micciancio2} by Daniele Micciancio.
\begin{lemma} \label{zn}
    The dual of $\mathbb{Z}^n$ is  $\mathbb{Z}^n$.
\end{lemma}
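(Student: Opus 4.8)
The plan is to show that $\widehat{\mathbb{Z}^n} = \mathbb{Z}^n$ by proving the two inclusions separately, which is the standard and cleanest approach for a statement of this form. Recall that since $\mathbb{Z}^n$ is full-rank, $\text{span}(\mathbb{Z}^n) = \mathbb{R}^n$, so the dual is $\{v \in \mathbb{R}^n : \langle v, w\rangle \in \mathbb{Z} \text{ for all } w \in \mathbb{Z}^n\}$.

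For the inclusion $\mathbb{Z}^n \subseteq \widehat{\mathbb{Z}^n}$: given $v \in \mathbb{Z}^n$ and any $w \in \mathbb{Z}^n$, the inner product $\langle v, w\rangle = \sum_{i=1}^n v_i w_i$ is a sum of products of integers, hence an integer. So every integer vector lies in the dual.

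For the reverse inclusion $\widehat{\mathbb{Z}^n} \subseteq \mathbb{Z}^n$: suppose $v \in \mathbb{R}^n$ satisfies $\langle v, w\rangle \in \mathbb{Z}$ for all $w \in \mathbb{Z}^n$. In particular, taking $w = e_i$ to be the $i$th standard basis vector, we get $\langle v, e_i\rangle = v_i \in \mathbb{Z}$. Since this holds for every $i \in \{1, \ldots, n\}$, we conclude $v \in \mathbb{Z}^n$. Combining the two inclusions gives $\widehat{\mathbb{Z}^n} = \mathbb{Z}^n$.

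There is no real obstacle here — the lemma is essentially immediate from the definition of the dual, with the only mild subtlety being the observation that testing against the standard basis vectors $e_i$ suffices to pin down each coordinate of $v$. (More generally, one only ever needs to check the dual condition against a generating set of the lattice, a fact worth noting since it will presumably be reused in subsequent lemmas about duals.) I would keep the write-up to a few lines.
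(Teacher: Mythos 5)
Your proof is correct and matches the paper's argument essentially verbatim: both directions are shown by inclusion, with the forward direction following from closure of $\mathbb{Z}$ under addition and multiplication, and the reverse direction by testing against the standard basis vectors $e_i$ to extract each coordinate.
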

\begin{proof}
Take any $v \in \mathbb{Z}^n$. Since integers are closed under addition and multiplication, we have $\langle v,w \rangle \in \mathbb{Z}$ for any $w \in \mathbb{Z}^n$. This shows $\mathbb{Z}^n \subset \hat{\mathbb{Z}}^n$. Similarly, take any $v \in  \hat{\mathbb{Z}}^n$. For any $i\in \{1,\ldots,n\}$ we have $e_i \in \mathbb{Z}^n$ (where ${e}_i$ is the unit vector with 1 for the $i$th coordinate and 0 for the remaining coordinates). By definition of the dual lattice, this implies $\langle {v},{e}_i\rangle =v_i \in \mathbb{Z}$. Hence, ${v} \in \mathbb{Z}^n$ and $\hat{\mathbb{Z}}^n \subset \mathbb{Z}^n$. This proves that the dual of $\mathbb{Z}^n$ is $\mathbb{Z}^n$.
\end{proof}
\begin{lemma} \label{qzn}
    Let $\hat{\mathcal{L}}$ be the dual lattice of $\mathcal{L}$. For any $d>0$, the dual lattice of $d \mathcal{L}$ is $\frac{1}{d} \hat{\mathcal{L}}$.
\end{lemma}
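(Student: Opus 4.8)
The plan is to prove the statement $\widehat{d\mathcal{L}} = \tfrac{1}{d}\hat{\mathcal{L}}$ by the standard technique of double inclusion, using only the definition of the dual lattice and the bilinearity of the inner product. The key observation is that scaling a lattice by $d$ scales all inner products by $d$, so to keep inner products integral against $d\mathcal{L}$ we must scale the dual down by $\tfrac{1}{d}$.

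First I would show $\tfrac{1}{d}\hat{\mathcal{L}} \subseteq \widehat{d\mathcal{L}}$. Take any $v \in \hat{\mathcal{L}}$; I want to show $\tfrac{1}{d}v \in \widehat{d\mathcal{L}}$. Note that $\text{span}(d\mathcal{L}) = \text{span}(\mathcal{L})$, so $\tfrac{1}{d}v$ lies in the correct span. For any $w \in \mathcal{L}$, the generic element of $d\mathcal{L}$ is $dw$, and $\langle \tfrac{1}{d}v, dw \rangle = \langle v, w \rangle \in \mathbb{Z}$ by the definition of $\hat{\mathcal{L}}$. Hence $\tfrac{1}{d}v \in \widehat{d\mathcal{L}}$.

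Next I would show the reverse inclusion $\widehat{d\mathcal{L}} \subseteq \tfrac{1}{d}\hat{\mathcal{L}}$. Take any $u \in \widehat{d\mathcal{L}}$, so $u \in \text{span}(\mathcal{L})$ and $\langle u, dw \rangle \in \mathbb{Z}$ for all $w \in \mathcal{L}$. Then $\langle du, w \rangle = \langle u, dw \rangle \in \mathbb{Z}$ for all $w \in \mathcal{L}$, and $du \in \text{span}(\mathcal{L})$, so $du \in \hat{\mathcal{L}}$, i.e.\ $u \in \tfrac{1}{d}\hat{\mathcal{L}}$. Combining the two inclusions gives the result.

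I do not anticipate a genuine obstacle here; the proof is entirely routine once one keeps careful track of where the factor of $d$ lands. The only subtlety worth stating explicitly is the remark that $\text{span}(d\mathcal{L}) = \text{span}(\mathcal{L})$ for $d > 0$, which ensures the span condition in the definition of the dual is satisfied on both sides; I would mention it in a single clause rather than belabor it.
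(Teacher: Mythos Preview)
Your proposal is correct and follows essentially the same double-inclusion argument as the paper, using bilinearity of the inner product to shuttle the factor of $d$ between the two arguments. The only difference is cosmetic: you prove the inclusions in the opposite order and you explicitly note that $\text{span}(d\mathcal{L}) = \text{span}(\mathcal{L})$, a point the paper leaves implicit.
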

\begin{proof} 
Take any $v \in \hat{(d\mathcal{L})}$. For all $w \in \mathcal{L}$, we have $d w \in d\mathcal{L}$, and hence:
\begin{align}
    \langle {v}, d {w}\rangle \in \mathbb{Z}  \Rightarrow \langle d v, {w}\rangle \in \mathbb{Z} \Rightarrow d {v} \in \hat{\mathcal{L}} \Rightarrow {v} \in \frac{1}{d} \hat{\mathcal{L}} \Rightarrow \hat{(d\mathcal{L})} \subset \frac{1}{d} \hat{\mathcal{L}}.
\end{align}
Similarly, take any  ${v}\in \frac{1}{d} \hat{\mathcal{L}}$, implying $d {v} \in  \hat{\mathcal{L}}$. Note that for all ${w} \in d \mathcal{L}$, we have $\frac{{w}}{d} \in \mathcal{L}$. Hence:
\begin{align}
   \frac{{w}}{d} \in \mathcal{L}, d{v} \in  \hat{\mathcal{L}} \Rightarrow  \left\langle d{v}, \frac{{w}}{d} \right\rangle \in \mathbb{Z} \Rightarrow  \langle {v}, {w}\rangle \in \mathbb{Z} \Rightarrow {v} \in \hat{(d\mathcal{L})} \Rightarrow \frac{1}{d} \hat{\mathcal{L}} \subset \hat{(d\mathcal{L})}  
\end{align}
Hence, $\frac{1}{d} \hat{\mathcal{L}}$ is the dual of $d\mathcal{L}$.
\end{proof}
\begin{theorem}[Theorem 2 from \cite{micciancio2}] \label{emintheo}
The dual of a lattice with basis $\mathbf{B}$ is a lattice with basis (the \emph{dual basis}) $\mathbf{D}=\mathbf{BG}^{-1}$, where $\mathbf{G}=\mathbf{B}^\top \mathbf{B}$ is the Gram matrix of $\mathbf{B}$. 
\end{theorem}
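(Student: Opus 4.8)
The plan is to verify directly that the columns of $\mathbf{D} = \mathbf{B}\mathbf{G}^{-1}$ both lie in the dual lattice $\hat{\mathcal{L}}$ and generate all of it. First I would check that $\mathbf{D}$ is well-defined and spans the same subspace as $\mathbf{B}$: since $\mathbf{G} = \mathbf{B}^\top\mathbf{B}$ is the Gram matrix of a basis, it is symmetric positive definite, hence invertible, so $\mathbf{G}^{-1}$ exists, and since $\mathbf{G}^{-1}$ is an invertible $n\times n$ matrix, $\mathrm{span}(\mathbf{D}) = \mathrm{span}(\mathbf{B}) = \mathrm{span}(\mathcal{L})$. The key algebraic observation is the ``biorthogonality'' identity $\mathbf{D}^\top \mathbf{B} = \mathbf{G}^{-\top}\mathbf{B}^\top\mathbf{B} = \mathbf{G}^{-1}\mathbf{G} = \mathbf{I}$ (using that $\mathbf{G}$, and hence $\mathbf{G}^{-1}$, is symmetric); equivalently $\langle d_i, b_j\rangle = \delta_{ij}$.

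Next I would use this identity to show $\mathbf{D}$ generates the dual. For the inclusion $\mathcal{L}(\mathbf{D}) \subseteq \hat{\mathcal{L}}$: any element of $\mathcal{L}(\mathbf{D})$ is $\mathbf{D}z$ for $z \in \Z^n$, and for any $w = \mathbf{B}y \in \mathcal{L}$ with $y \in \Z^n$ we get $\langle \mathbf{D}z, \mathbf{B}y\rangle = z^\top \mathbf{D}^\top \mathbf{B} y = z^\top y \in \Z$, and $\mathbf{D}z \in \mathrm{span}(\mathcal{L})$, so $\mathbf{D}z \in \hat{\mathcal{L}}$. For the reverse inclusion $\hat{\mathcal{L}} \subseteq \mathcal{L}(\mathbf{D})$: take $v \in \hat{\mathcal{L}}$. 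Since $v \in \mathrm{span}(\mathbf{D})$, write $v = \mathbf{D}x$ for some $x \in \R^n$. Then for each $j$, $\langle v, b_j\rangle = x^\top \mathbf{D}^\top \mathbf{B} e_j = x_j$, and by definition of the dual lattice $\langle v, b_j\rangle \in \Z$, so $x \in \Z^n$ and $v \in \mathcal{L}(\mathbf{D})$. This establishes $\hat{\mathcal{L}} = \mathcal{L}(\mathbf{D})$, and in particular $\hat{\mathcal{L}}$ is a lattice with $\mathbf{D}$ as a basis (the columns of $\mathbf{D}$ are linearly independent because $\mathbf{G}^{-1}$ is invertible).

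I do not expect a serious obstacle here; the only subtlety is being careful about whether the Gram matrix is defined with $\mathbf{B}$ having basis vectors as columns (matching the paper's convention in Definition~\ref{deneme} and the determinant discussion) versus as rows, since this determines whether one writes $\mathbf{D} = \mathbf{B}\mathbf{G}^{-1}$ or $\mathbf{D} = \mathbf{B}\mathbf{G}^{-\top}$ — but symmetry of $\mathbf{G}$ makes these the same. A second minor point worth stating explicitly is the symmetry of $\mathbf{G}^{-1}$, which follows from $(\mathbf{G}^{-1})^\top = (\mathbf{G}^\top)^{-1} = \mathbf{G}^{-1}$; this is what makes the biorthogonality identity come out as the identity matrix rather than its transpose. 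The main work is simply assembling the two inclusions cleanly, and the argument is essentially a coordinate computation once the identity $\mathbf{D}^\top\mathbf{B} = \mathbf{I}$ is in hand.
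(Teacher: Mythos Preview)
Your proposal is correct and follows essentially the same two-inclusion argument as the paper: both verify $\mathcal{L}(\mathbf{D})\subseteq\hat{\mathcal{L}}$ via the inner-product computation $\langle \mathbf{D}z,\mathbf{B}y\rangle=z^\top y$, and both obtain the reverse inclusion by showing that the $\mathbf{D}$-coordinates of any $v\in\hat{\mathcal{L}}$ are exactly the integers $\langle v,b_j\rangle$ (the paper packages this as $\mathbf{B}^\top v\in\Z^n$ after writing $v=\mathbf{B}w$, which is the same computation). Your explicit isolation of the biorthogonality identity $\mathbf{D}^\top\mathbf{B}=\mathbf{I}$ makes the presentation a bit cleaner, but the approach is the same.
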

\begin{proof}
For any basis $\textbf{B}$, define $\textbf{D}$ as given in the theorem. The invertibility of \textbf{G} of follows from the linear independence of the columns of \textbf{B} (hence $\textbf{B}^\top \textbf{B}$ is nonsingular). Say $\mathcal{L}(\textbf{B})$ and $\mathcal{L}(\textbf{D})$ are the lattices corresponding to bases $\mathbf{B}$ and $\mathbf{D}$, respectively, and $\hat{\mathcal{L}}(\textbf{B})$ and $\hat{\mathcal{L}}(\textbf{D})$ are their dual lattices. From Definition \ref{deneme}, it follows that any vector in $\mathcal{L}(\mathbf{D})$ is of form $\mathbf{D}y$ for some integer vector $y$. Accordingly, take any $\mathbf{D}y \in \mathcal{L}(\mathbf{D})$. We have:
\begin{itemize}
    \item $\mathbf{Dy}= \mathbf{BG}^{-1}{y}= \mathbf{B}((\mathbf{B}^\top \mathbf{B})^{-1} {y}) \in \text{span}(\mathbf{B})$.
    \item For all $ \textbf{B}x \in \mathcal{L}(\mathbf{B})$, we have
    $\langle \mathbf{D}y, \mathbf{B}x \rangle=  (\mathbf{B}x)^\top (\mathbf{D}y)= {x}^\top \mathbf{B}^\top \mathbf{B}(\mathbf{B}^\top \mathbf{B})^{-1} {y} = {x}^\top {y} \in \mathbb{Z}$.
\end{itemize}
Hence, we have $\mathbf{D}y \in \hat{\mathcal{L}} (\textbf{B})$, implying $\mathcal{L}(\mathbf{D}) \subset \hat{\mathcal{L}}(\mathbf{B})$. Now, take any ${v} \in \hat{\mathcal{L}}(\mathbf{B})$. By definition of the dual, we must have ${v} \in \text{span}(B)$, implying that ${v}= \mathbf{B}w$ for some ${w} \in \mathbb{R}^n$. Moreover, since the columns of $\mathbf{B}$ are in $\mathcal{L}(\mathbf{B})$, it follows that $\mathbf{B}^\top {v} \in \mathbb{Z}^k$. Thus:
\begin{align}
    {v} = \textbf{B}w= \textbf{B} (\textbf{B}^\top \textbf{B})^{-1} \textbf{B}^\top \textbf{B} {w} =\textbf{D}(\textbf{B}^\top v) \in \mathcal{L}(\mathbf{D}).
\end{align}
This implies $\hat{\mathcal{L}} (\textbf{B})\subset \mathcal{L}(\mathbf{D})$. Therefore,  $\mathcal{L}(\mathbf{D})= \hat{\mathcal{L}} (\textbf{B})$, so $\mathbf{D}$ is indeed the dual basis of $\mathbf{B}$.
\end{proof}
\begin{lemma} \label{abidik}
    If $\mathbf{D}$ is the dual basis of  $\mathbf{B}$, then  $\mathbf{B}$ is the dual basis of $\mathbf{D}$. That is, the relationship between primal and dual bases is symmetric.
\end{lemma}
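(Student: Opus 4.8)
The natural approach is to apply Theorem~\ref{emintheo} twice: once to get $\mathbf{D}$ from $\mathbf{B}$, and once more to get the dual basis of $\mathbf{D}$, then check that the latter is exactly $\mathbf{B}$. By Theorem~\ref{emintheo} we have $\mathbf{D} = \mathbf{B}\mathbf{G}^{-1}$ with $\mathbf{G} = \mathbf{B}^\top\mathbf{B}$. Applying the same theorem to $\mathbf{D}$, the dual basis of $\mathbf{D}$ is $\mathbf{D}\,\mathbf{H}^{-1}$ where $\mathbf{H} = \mathbf{D}^\top\mathbf{D}$ is the Gram matrix of $\mathbf{D}$. So the entire proof reduces to computing $\mathbf{H}$ in terms of $\mathbf{G}$ and simplifying $\mathbf{D}\mathbf{H}^{-1}$.

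\textbf{Key steps.} First I would record that $\mathbf{G}$ is symmetric and invertible (the invertibility was already argued in the proof of Theorem~\ref{emintheo}), so $\mathbf{G}^\top = \mathbf{G}$ and hence $(\mathbf{G}^{-1})^\top = \mathbf{G}^{-1}$. Next I would compute
\[
\mathbf{H} = \mathbf{D}^\top\mathbf{D} = (\mathbf{B}\mathbf{G}^{-1})^\top(\mathbf{B}\mathbf{G}^{-1}) = (\mathbf{G}^{-1})^\top \mathbf{B}^\top\mathbf{B}\,\mathbf{G}^{-1} = \mathbf{G}^{-1}\mathbf{G}\,\mathbf{G}^{-1} = \mathbf{G}^{-1}.
\]
Then the dual basis of $\mathbf{D}$ is
\[
\mathbf{D}\mathbf{H}^{-1} = (\mathbf{B}\mathbf{G}^{-1})(\mathbf{G}^{-1})^{-1} = \mathbf{B}\mathbf{G}^{-1}\mathbf{G} = \mathbf{B},
\]
which is precisely the claim. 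Since Theorem~\ref{emintheo} also guarantees $\mathbf{B}$ is a genuine basis of $\mathcal{L}$, nothing further is needed.

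\textbf{Main obstacle.} There is no serious obstacle here — the only thing to be careful about is the bookkeeping with transposes, specifically invoking the symmetry of $\mathbf{G}$ to collapse $(\mathbf{G}^{-1})^\top \mathbf{G}\,\mathbf{G}^{-1}$ down to $\mathbf{G}^{-1}$; everything else is formula-chasing. An alternative, slightly more conceptual route would be to prove the involution property $\hat{\hat{\mathcal{L}}} = \mathcal{L}$ directly from the definition and then argue that the explicit dual-basis formula of Theorem~\ref{emintheo} pins down $\mathbf{B}$ as *the* dual basis of $\mathbf{D}$; but the two-line matrix computation above is cleaner and self-contained, so that is the version I would present.
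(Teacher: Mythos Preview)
Your proposal is correct and follows essentially the same approach as the paper: both compute the Gram matrix of $\mathbf{D}$ as $\mathbf{D}^\top\mathbf{D} = (\mathbf{B}^\top\mathbf{B})^{-1}$ (using symmetry of $\mathbf{G}$) and then plug into the dual-basis formula of Theorem~\ref{emintheo} to recover $\mathbf{B}$. The only cosmetic difference is that you invoke $\mathbf{G}^\top = \mathbf{G}$ up front, whereas the paper carries the transpose through and simplifies $(\mathbf{B}^\top\mathbf{B})^{-\top}$ at the end.
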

\begin{proof}
    Say $\textbf{D}$ is the dual basis of $\textbf{B}$. By Theorem \ref{emintheo}, this implies
        $\mathbf{D}=\textbf{B}(\textbf{B}^\top\textbf{B})^{-1}$. Hence we have
        \begin{align}
            \mathbf{D}^\top \mathbf{D}=(\textbf{B}^\top\textbf{B})^{-\top} \textbf{B}^\top\textbf{B}(\textbf{B}^\top\textbf{B})^{-1}=(\textbf{B}^\top\textbf{B})^{-\top} =   ((\textbf{B}^\top\textbf{B})^{\top})^{-1}=    (\textbf{B}^\top\textbf{B})^{-1}.   \label{bitir}\end{align}
            By Theorem \ref{emintheo} and (\ref{bitir}), the dual basis of $\mathbf{D}$ is:
\begin{align}
    \mathbf{D}(\textbf{D}^\top\textbf{D})^{-1}=\mathbf{D}((\textbf{B}^\top\textbf{B})^{-1}  )^{-1}=\mathbf{D}(\textbf{B}^\top\textbf{B})= \textbf{B}(\textbf{B}^\top\textbf{B})^{-1}(\textbf{B}^\top\textbf{B})=\textbf{B}.
\end{align}
Hence, the relation between primal and dual basis is indeed symmetric.
\end{proof}

\begin{lemma} \label{smith}
    For every lattice $\mathcal{L}$, we have $\det(\hat{\mathcal{L}})=\dfrac{1}{\det\left(\mathcal{L} \right)}$.
\end{lemma}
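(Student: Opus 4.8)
The plan is to reduce everything to the explicit dual basis formula of Theorem~\ref{emintheo} and the determinant formula $\det(\mathcal{L}) = \sqrt{\det(\mathbf{B}^\top \mathbf{B})}$. First I would fix a basis $\mathbf{B}$ of $\mathcal{L}$ and set $\mathbf{G} = \mathbf{B}^\top \mathbf{B}$, noting that $\mathbf{G}$ is symmetric positive definite (by linear independence of the columns of $\mathbf{B}$), hence invertible with $\det(\mathbf{G}) > 0$; this is exactly the invertibility already used in Theorem~\ref{emintheo}. By that theorem, $\mathbf{D} = \mathbf{B}\mathbf{G}^{-1}$ is a basis for $\hat{\mathcal{L}}$, so $\det(\hat{\mathcal{L}}) = \sqrt{\det(\mathbf{D}^\top \mathbf{D})}$.

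Next I would compute $\mathbf{D}^\top \mathbf{D}$. This is precisely the calculation carried out in the proof of Lemma~\ref{abidik}: using symmetry of $\mathbf{G}$ we get
\[
\mathbf{D}^\top \mathbf{D} = (\mathbf{G}^{-1})^\top \mathbf{B}^\top \mathbf{B}\, \mathbf{G}^{-1} = \mathbf{G}^{-\top} \mathbf{G}\, \mathbf{G}^{-1} = \mathbf{G}^{-1}.
\]
Taking determinants and using multiplicativity, $\det(\mathbf{D}^\top \mathbf{D}) = \det(\mathbf{G}^{-1}) = 1/\det(\mathbf{G})$. Therefore
\[
\det(\hat{\mathcal{L}}) = \sqrt{\frac{1}{\det(\mathbf{G})}} = \frac{1}{\sqrt{\det(\mathbf{B}^\top \mathbf{B})}} = \frac{1}{\det(\mathcal{L})},
\]
which is the claim.

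There is essentially no hard step here; the only things to be careful about are (i) that $\det(\mathcal{L})$ and $\det(\hat{\mathcal{L}})$ are well-defined independently of the chosen bases, which was already established in Section~\ref{sec:21} via unimodular change of basis, so it suffices to verify the identity for one basis $\mathbf{B}$ and its dual $\mathbf{D}$; and (ii) dropping the absolute value inside $\sqrt{\,\cdot\,}$, which is justified because $\mathbf{G}$ is positive definite. As a remark, in the full-rank case where $\mathbf{B}$ is square this also reads simply as $|\det(\mathbf{D})| = 1/|\det(\mathbf{B})|$, since then $\mathbf{D} = \mathbf{B}(\mathbf{B}^\top\mathbf{B})^{-1} = (\mathbf{B}^\top)^{-1}$.
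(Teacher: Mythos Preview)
Your proof is correct and follows essentially the same approach as the paper: both establish that the Gram matrix of the dual basis $\mathbf{D}$ is the inverse of the Gram matrix of $\mathbf{B}$, then take determinants and square roots. The only cosmetic difference is that the paper routes the computation through the identities $\mathbf{B}^\top\mathbf{D}=\mathbf{I}$ and $\mathbf{D}^\top\mathbf{B}=\mathbf{I}$ to get $(\mathbf{B}^\top\mathbf{B})(\mathbf{D}^\top\mathbf{D})=\mathbf{I}$, whereas you expand $\mathbf{D}^\top\mathbf{D}$ directly using the symmetry of $\mathbf{G}$ (which is exactly the computation already done in the proof of Lemma~\ref{abidik}, as you note).
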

\begin{proof} Let $\mathbf{B}$ be the basis for $\mathcal{L}$ and $\mathbf{D}$ be the dual basis. Then by Theorem \ref{emintheo} we have  $\mathbf{B}^\top\mathbf{D}=\mathbf{B}^\top
\textbf{B}(\textbf{B}^\top\textbf{B})^{-1}=\textbf{I}$. By Lemma \ref{abidik}, it follows that $\mathbf{D}^\top\mathbf{B}=\mathbf{I}$. This implies that
\begin{align}
    (\mathbf{B}^\top \mathbf{B})  (\mathbf{D}^\top \mathbf{D})=  (\mathbf{B}^\top (\mathbf{B}\mathbf{D}^\top )\mathbf{D})= (\mathbf{B}^\top \mathbf{D})= \mathbf{I}.
\end{align}
Hence, the Gram matrix of $\mathbf{B}$ is the inverse of the Gram matrix of $\mathbf{D}$. This implies that $\det(\mathbf{B}^\top \mathbf{B})= 1 / \det(\mathbf{D}^\top \mathbf{D})$ and therefore $\det(\mathcal{L})=1/\det(\mathcal{\hat{L}})$. 
\end{proof}
We now use the results above for our main proof. Recall that the lattice we are considering is $L_{\mathbf{A}} = q\mathbb{Z}^n+\mathbf{A} \mathbb{Z}^k$. It follows from Lemmas \ref{zn} and \ref{qzn} that the dual of $q\mathbb{Z}^n$ is $\frac{1}{q}\mathbb{Z}^n$. Since $ q\mathbb{Z}^n \subset q\mathbb{Z}^n+\mathbf{A} \mathbb{Z}^k$, it follows that the dual of $L_{\mathbf{A}}$ must be contained in $\frac{1}{q}\mathbb{Z}^n$. This is because adding more vectors to the lattice simply adds more constraints to the dual lattice. In other words, any vector ${w}$ in the dual of $L_{\mathbf{A}}$ needs to have $\langle {v},{w} \rangle \in \Z $ for all ${v} \in q\mathbb{Z}^n \subset L_{\mathbf{A}}$, implying that ${w} \in \frac{1}{q} \mathbb{Z}^n$. Let $\textbf{B}=\{b_1,..,b_n\}$ be a LLL-reduced basis for $L_{\mathbf{A}}$, and  and let $\textbf{D}=\{d_1,...,d_n\}$ be the dual basis for $\textbf{B}$, as defined in Theorem $\ref{emintheo}$. Lastly, let $\textbf{B}^*=\{b^*_1,..,b^*_n\}$ and $\textbf{D}^*=\{d^*_1,..,d^*_n\}$ be the orthogonal bases generated by applying Gram-Schmidt to $\textbf{B}$ and $\textbf{D}$, respectively. Since the dual lattice is contained in $\frac{1}{q}\mathbb{Z}^n$, it follows that \begin{align}
\det(\hat{\mathcal{L}}_{i+1})=\prod_{j=i+1}^n ||d^*_j|| \geq \frac{1}{q^{n-i}}. \label{flu}
\end{align} 
where $\hat{\mathcal{L}}_{i+1}$ is defined as the lattice spanned by $d_{i+1}, \ldots, d_{n}$. Using Lemma \ref{smith} and Corollary 6 from \cite{micciancio2} (which explains how the relationship in Lemma \ref{smith} extends to the orthogonalizations of dual bases), we also know that:
\begin{align}
    \prod_{j=i+1}^n ||b^*_j|| = \left(\prod_{j=i+1}^n ||d^*_j||  \right)^{-1}, \label{shot}
\end{align} 
since we can treat the last $n-i$ vectors as a partial basis. Combining (\ref{flu}) and (\ref{shot}), we get:
\begin{align}
\prod_{j=i+1}^n ||b^*_j|| &\leq q^{n-i}\\
\Rightarrow \ln\left(\prod_{j=i+1}^n ||b^*_j|| \right) &\leq \ln q^{n-i}\\
\Rightarrow \sum_{j=i+1}^n\ln||b^*_j||=\sum_{j=i+1}^n\ell_j &\leq (n-i)\ln q, \label{leftover_volume}
\end{align} 
which corresponds to Fact 2.4 from \cite{dvw21v2}.

\subsection{Completing the Proof}\label{sec:lastfacts}

Putting all of these pieces together, we have the following proposition, which corresponds to Proposition 2.5 in Ducas and van Woerden's note: 
\begin{proposition}
    $$\min_i \ell_i \ge \ln\big(\lambda_1(\mathcal{L})\big) - \sqrt{2k\cdot \ln \delta' \cdot \ln q},$$
where $\delta' = \frac{1}{\sqrt{\delta}}$ for $\delta$ the traditional parameter in the LLL algorithm, as explained immediately after Definition \ref{LLL reduced}.
\end{proposition}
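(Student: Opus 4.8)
The plan is to prove the stronger pointwise statement that $\ell_i \ge \ln\lambda_1(\mathcal{L}) - \sqrt{2k\ln\delta'\ln q}$ for \emph{every} index $i$; taking the minimum over $i$ then gives the proposition. Write $\mathcal{L}=L_{\mathbf{A}}$ and abbreviate $Q=\ln q$, $c=\ln\delta'$, and $W=\sqrt{2kQ/c}$. First I would assemble the ingredients. Since $b_1=b_1^*$ (the first Gram--Schmidt vector equals $b_1$) is a nonzero lattice vector, $\ell_1\ge\ln\lambda_1$; since $qe_1\in q\Z^n\subseteq\mathcal{L}$, also $\ln\lambda_1\le Q$. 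From the structure-theorem description of the finite group rank in Section~\ref{sec:3}, $\mathcal{L}\bmod q\cong\Z_{q_1}\times\cdots\times\Z_{q_k}$ with each $q_i\mid q$, so $\det\mathcal{L}=q^n/\prod_i q_i\ge q^{n-k}$ and hence $\sum_{j=1}^n\ell_j=\ln\det\mathcal{L}\ge(n-k)Q$. Applying Lemma~\ref{shortbasis} to the full-rank set $S=\{qe_1,\dots,qe_n\}$ produces a basis of $\mathcal{L}$ whose Gram--Schmidt vectors all have norm at most $q$; we may take $\mathbf{B}$ to be the LLL-reduced basis obtained by running LLL on \emph{that} basis, so Fact~2.3 (Equation~\ref{partialvols}) yields the head-volume bound $\sum_{j=1}^i\ell_j\le iQ$ for all $i$, while combining $\det\mathcal{L}\ge q^{n-k}$ with Fact~2.4 (Equation~\ref{leftover_volume}) yields the companion bound $\sum_{j=1}^i\ell_j\ge(i-k)Q$. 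Finally, the LLL slope bound $\ell_{i+1}\ge\ell_i-c$ (Equation~\ref{eq:2.3}) will be used in both directions.

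Now fix $i$ and split on its size. If $i<W$, telescoping the slope bound forward from $\ell_1$ gives $\ell_i\ge\ell_1-(i-1)c\ge\ln\lambda_1-(i-1)c>\ln\lambda_1-Wc=\ln\lambda_1-\sqrt{2kQc}$. If $i\ge W$ (so $i\ge\lceil W\rceil$), put $w=\lceil W\rceil-1$, so $0\le w\le i-1$, and consider the window of indices $\{i-w,\dots,i\}$. Telescoping the slope bound \emph{backward} from $\ell_i$ gives $\ell_j\le\ell_i+(i-j)c$ for $j\le i$, so $\sum_{j=i-w}^i\ell_j\le(w+1)\ell_i+\tfrac{w(w+1)}{2}c$. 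On the other hand $\sum_{j=i-w}^i\ell_j=\sum_{j=1}^i\ell_j-\sum_{j=1}^{i-w-1}\ell_j\ge(i-k)Q-(i-w-1)Q=(w+1-k)Q$ by the two volume bounds. Equating and dividing by $w+1$ gives $\ell_i\ge Q-\tfrac{kQ}{w+1}-\tfrac{wc}{2}$; using $w+1=\lceil W\rceil\ge W$, $w=\lceil W\rceil-1<W$, and the identity $\tfrac{kQ}{W}+\tfrac{Wc}{2}=\sqrt{2kQc}$, this gives $\ell_i> Q-\sqrt{2kQc}\ge\ln\lambda_1-\sqrt{2kQc}$. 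When $W>n$ the second case never arises and every index is handled by the first, so in all cases $\ell_i\ge\ln\lambda_1-\sqrt{2kQc}$, completing the proof.

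The delicate point — and the one I expect to require the most care — is the window-width optimization. Summing naively all the way back to index $1$ (equivalently, windowing at the argmin with $w=i^*-1$) leaves an uncontrollable error term of order $i^*c$; taking instead a window of the \emph{fixed} width $W=\sqrt{2kQ/c}$ balances the two contributions $\tfrac{kQ}{w+1}$ (which carries the $k$-dependence, via the deficit $\det\mathcal{L}\ge q^{n-k}$) against $\tfrac{wc}{2}$ (which carries the LLL slope), and this balancing is exactly what produces the square root in the bound — at the cost of having to treat indices $i<W$, for which no window of width $W$ fits, separately. A secondary subtlety is that the head-volume bound $\sum_{j=1}^i\ell_j\le iQ$ is not automatic for an arbitrary LLL-reduced basis of $\mathcal{L}$; it must be obtained by running LLL on a basis of small Gram--Schmidt norm (Lemma~\ref{shortbasis}) and then invoking Fact~2.3. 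By contrast, rounding $W$ up to $\lceil W\rceil$ is harmless, since $\lceil W\rceil\ge W$ improves the $\tfrac{kQ}{w+1}$ term while $\lceil W\rceil-1<W$ improves the $\tfrac{wc}{2}$ term.
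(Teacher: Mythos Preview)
Your proposal is correct and follows essentially the same argument as the paper's own proof: both set a threshold $d=\lceil\sqrt{2k\ln q/\ln\delta'}\rceil$ (your $\lceil W\rceil$), handle $i\le d$ by telescoping the Lov\'asz slope bound down from $\ell_1\ge\ln\lambda_1$, and handle $i>d$ by sandwiching the window sum $\sum_{j=i-d+1}^i\ell_j$ between $(d-k)\ln q$ (from the two partial-volume inequalities) and $d\ell_i+\tfrac{d(d-1)}{2}\ln\delta'$ (from the backward slope bound). Your write-up is in fact slightly more careful than the paper's in two places: you explicitly justify the lower partial-volume bound $\sum_{j\le i}\ell_j\ge(i-k)\ln q$ via $\det\mathcal{L}=q^n/|\mathcal{L}/q\Z^n|\ge q^{n-k}$, and you spell out that the upper partial-volume bound requires first producing a short basis via Lemma~\ref{shortbasis} before running LLL.
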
 
\begin{proof}
Let $d = \Big\lceil\sqrt{\frac{2k\cdot\ln q}{\ln \delta'}}\Big\rceil$, we can then rewrite the above inequality as 
$$\min_i \ell_i \ge \ln\big(\lambda_1(\mathcal{L})\big) - (d - 1)\ln(\delta'),$$
because $d - 1 \le \sqrt{\frac{2k\cdot\ln q}{\ln \delta'}}.$ Note that when $i \le d,$ the proposition follows from Equation~\ref{eq:2.3}. This is because $\ell_1$ is the magnitude of a lattice vector $\textbf{b}^*_1$, so $\ell_1 \ge \ln\big(\lambda_1(\mathcal{L})\big)$ by definition, and we can iteratively apply the Lov\'asz condition until we reach $\ell_1:$
\begin{align}
    \ell_i &\ge  \ell_{i - 1} - \ln(\delta') \\ 
    &\ge  \big(\ell_{i-2} - \ln(\delta') \big) - \ln(\delta') = \ell_{i-2} - 2\ln(\delta') \\ 
    & \vdots \\ 
    &\ge \ell_1 - (i - 1)\ln(\delta')\\ 
    & \ge \ln(\lambda_1(\mathcal{L})) - (i - 1)\ln(\delta').
\end{align}
It follows that for $i \le d,$ this is at least $\ln(\lambda_1(\mathcal{L})) - (d - 1)\ln(\delta')$.
\\ 
\\  
For the $i > d$ case, we can start by bounding the sum $\sum_{j=i}^i \ell_j$:
\begin{equation}\label{partialvols2}
    \sum_{j = 1}^i \ell_j \le \sum_{j = 1}^i \ell'_j \le i\ln(q).
\end{equation}
This follows directly from the fact that $\sum_{j = 1}^i \ell_j \le \sum_{j = 1}^i \ell'_j$ (Equation~\ref{partialvols}), and the fact that $\ell'_i \le \ln(q)$ for all $\ell'_i$ (Lemma~\ref{shortbasis}).
We can also establish a lower bound by subtracting Equation~\ref{leftover_volume} from \ref{partialvols2}: 
\begin{equation}
   (i - k)\ln(q) \le \sum_{j = 1}^i \ell_j \le i\ln(q) \quad\quad\textnormal{ for } d < i \le n. 
\end{equation}
Now, we use the fact that $i > d$ and subtract
\begin{align}
    \sum_{j = 1}^i \ell_j - \sum_{j = 1}^{i - d} \ell_j &\ge (i - k)\ln(q) - (i - d)\ln(q), \\ 
    \sum_{j = i - d + 1}^i \ell_j &\ge  (d - k)\ln(q).
\end{align}
Furthermore, we can iteratively apply the Lov\'asz condition like we did above to obtain $\ell_j \le \ell_i + (j-i)\ln(\delta')$ for $j \le i.$ Then, we can use this fact to establish an upper bound: 
\begin{equation}
(d - k)\ln(q) \le \sum_{j = i - d + 1}^i \ell_j \le  d\ell_i + \sum_{j = 0}^{d - 1} j\ln(\delta').
\end{equation}
Rewriting this, we have 
\begin{equation}
(d - k)\ln(q) \le  d\ell_i + \Big(\frac{d(d-1)}{2}\Big)\ln(\delta').
\end{equation}
Then, we divide everything by $d$ and rearrange to obtain 
\begin{align}
\ell_i &\ge \frac{(d - k)\ln(q)}{d} - \Bigg(\frac{(d-1)}{2}\Bigg)\ln(\delta'), \\ 
\ell_i &\ge \ln(q) - \frac{k}{d}\ln(q) - \Bigg(\frac{(d-1)}{2}\Bigg)\ln(\delta').
\end{align} 
Recall that we defined $d$ such that $d - 1 \le \sqrt{\frac{2k\cdot\ln q}{\ln \delta'}} \le d,$ so we can substitute and simplify: 
\begin{align}
\ell_i &\ge \ln(q) - \frac{k}{d}\ln(q) - (d-1)\frac{1}{2}\ln(\delta'), \\ 
\ell_i &\ge \ln(q) - \frac{k}{\sqrt{\frac{2k\cdot\ln q}{\ln \delta'}}}\ln(q) - \Bigg(\sqrt{\frac{2k\cdot\ln q}{\ln \delta'}}\Bigg) \frac{1}{2}\ln(\delta'), \\ 
\ell_i &\ge \ln(q) - \tfrac{1}{2}\sqrt{2k\ln(q)\cdot \ln(\delta')} - \frac{1}{2}\sqrt{2k\cdot\ln(q) \cdot \ln(\delta')}.
\end{align} 
Finally, we use the fact that $\lambda_1(\mathcal{L})$ must be less than $q$ to obtain 
\begin{align}
\ell_i &\ge \ln(\lambda_1(\mathcal{L})) - \sqrt{2k\ln(q)\cdot \ln(\delta')} \quad\quad \textnormal{ for } d < i \le n,
\end{align} 
which is what we wanted to show. 
\end{proof}

All that remains is to show that Babai's nearest plane algorithm will solve BDD. Recall that Babai's algorithm solves $\gamma$-CVP for $\gamma(n) \le 2^{n/2}$ \cite{regevcvp}, and when $r < \lambda_1(\mathcal{L}) / 2,$ then CVP solves BDD because the solution is unique. In other words, there is only one possible vector within $r$ in this case, and that is the closest one. Moreover, the smallest Gram-Schmidt vector $\min_i \|b_i^*\|$ is less than $\lambda_1(\mathcal{L})$ \cite{epfl}, and therefore Babai's algorithm will solve BDD for $r < \min_i \|b_i^*\|.$ Due to the exponential factor and the above proposition, we have shown that Babai's algorithm can solve BDD up to radius $\tfrac{1}{2}\lambda_1(\mathcal{L})\cdot \exp{-\sqrt{2k\cdot\ln(q)\cdot\ln(\delta')}}$, concluding the proof of Theorem~\ref{OHAyanishey}.

\section{Conclusion and Future Work}

After a survey of numerous interrelated lattice problems and previous work on quantum approaches to these problems, we have presented an overview of Eldar and Hallgren's work, as well as a complete proof of Ducas and van Woerden's result with all details filled in.

There is further work to be done in the area of classical algorithms for BDD and related lattice problems. In particular, instead of LLL, it would be worthwhile to consider alternative basis reductions, such as a Block Korkin-Zolotarev (BKZ) reduction, which is known to give a better approximation factor than LLL when solving $\gamma$-CVP using Babai's nearest plane algorithm \cite{book}. It would also be interesting to consider randomized heuristics for lattice problems, such as Klein's algorithm \cite{kleinalgo}.

There is a more general question of when a quantum algorithm for a lattice problem is guaranteed to have a classical counterpart. If more quantum-classical pairings are found, it would be interesting to consider the connections between these cases and conjecture such a result. 
\section*{Acknowledgements}

We would like to thank Professor Boaz Barak and Emil Khabiboulline for their guidance throughout this project. We are also thankful to Professor Vinod Vaikuntanathan and Jessica Sorrell for useful correspondence on lattice-based quantum algorithms and their classical counterparts, and to Professor Sean Hallgren and Dr. Lior Eldar for helpful feedback and for kindly providing us with the manuscript of their result.

\bibliography{refs.bib}

\end{document}